\newcommand{\RB}{\ensuremath{\mathbb R}\xspace}
\newcommand{\BB}{\ensuremath{\mathcal B}\xspace}
\newcommand{\CC}{\ensuremath{\mathcal C}\xspace}
\newcommand{\EE}{\ensuremath{\mathcal E}\xspace}
\newcommand{\GG}{\ensuremath{\mathcal G}\xspace}
\newcommand{\HH}{\ensuremath{\mathcal H}\xspace}
\newcommand{\KK}{\ensuremath{\mathcal K}\xspace}
\newcommand{\OO}{\ensuremath{\mathcal O}\xspace}
\newcommand{\UU}{\ensuremath{\mathcal U}\xspace}
\newcommand{\VV}{\ensuremath{\mathcal V}\xspace}
\newcommand{\WW}{\ensuremath{\mathcal W}\xspace}
\newcommand{\ttt}{\ensuremath{\mathfrak t}\xspace}
\newcommand{\www}{\ensuremath{\mathfrak w}\xspace}
\newcommand{\calP}{\mathcal{P}}
\newcommand{\calY}{\mathcal{Y}}
\newcommand{\calp}{\mathcal{P}}
\newcommand{\tbl}{\mathsf{DP}}
\newcommand{\CN}{{\sc Constructive Control over Network}\xspace}
\newcommand{\DCN}{{\sc Destructive Control over Network}\xspace}
\newcommand{\BCN}{{\sc Budgetless Constructive Control over Network}\xspace}
\newcommand{\BDCN}{{\sc Budgetless Destructive Control over Network}\xspace}
\newcommand{\CB}{{\sc CanBeat}}
\newcommand{\XThC}{{\sc Regular Exact 3-cover}\xspace}
\newtheorem{theorem}{Theorem}
\newtheorem{definition}{Definition}
\newtheorem{probdefinition}{\bf Problem Definition}
\newtheorem{claim}{Claim}
\providecommand{\keywords}[1]
{
  \small	
  \textbf{\textit{Keywords---}} #1
}
\title{Voter Participation Control in Online Polls}
\author[1]{Koustav De \\ email: \href{mailto:koustavde7@kgpian.iitkgp.ac.in}{koustavde7@kgpian.iitkgp.ac.in}}
\author[1]{\\Palash Dey \\ email: \href{mailto:palash.dey@cse.iitkgp.ac.in}{palash.dey@cse.iitkgp.ac.in}}
\author[1]{\\Swagato Sanyal \\ email: \href{mailto:swagato@cse.iitkgp.ac.in}{swagato@cse.iitkgp.ac.in}}
\affil[1]{\small Department of Computer Science and Engineering, Indian Institute of Technology Kharagpur}
\begin{document}
\maketitle
\begin{abstract}
News outlets, surveyors, and other organizations often conduct polls on social networks to gain insights into public opinion. Such a poll is typically started by someone on a social network who sends it to her friends. If a person participates in the poll, the poll information gets published on her wall, which in turn enables her friends to participate, and the process continues. Eventually, a subset of the population participates in the poll, and the pollster learns the outcome of that poll. We initiate the study of a new but natural type of election control in such online elections.

We study how difficult/easy it is to sway the outcome of such polls in one's favor/against (aka constructive vs destructive) by any malicious influencer who nudges/bribes people for seemingly harmless actions like non-participation. These questions are important from the standpoint of studying the power of resistance of online voting against malicious behavior. The destructive version is also important to quantify the robustness of the winner of an online voting. We show that both problems are computationally intractable even if the election is over only two candidates and the influencer has an infinite amount of money to spend (that is, every voter can be persuaded to not participate). We strengthen this result by proving that the computational task remains substantially challenging even if the underlying network is a tree. Finally, we show that there is a polynomial-time algorithm for the constructive version of the problem when we have $\OO(1)$ candidates, and the treewidth of the underlying graph is $\OO(1)$; the algorithm for the destructive version does not even need to assume $\OO(1)$ number of candidates. Hence, we observe that the destructive version is computationally easier than the constructive version.
\end{abstract}



\keywords{Voting theory, social choice, plurality voting rule, graph algorithms, NP-hardness, bounded treewidth graphs, parameterized algorithms.}


         
\newcommand{\BibTeX}{\rm B\kern-.05em{\sc i\kern-.025em b}\kern-.08em\TeX}








\section{Introduction}
\label{sec:introduction}
Voting is arguably the most widely used tool when a set of people needs to decide on one alternative/candidate. It also has widespread applications in AI also~\cite{DBLP:conf/aaai/PennockHG00,DBLP:journals/tcbb/JacksonSA08}. The study of various election malpractices and their computational complexity has been one of the core research focus in computational social choice \cite{DBLP:reference/choice/2016,BARTHOLDI199227}. Election Control is one of the most well-studied malpractice where someone tries to change the structure of the election, for example, the set of voters, the set of candidates, etc. Indeed, in many voting scenarios, the key personalities of an institution often have the power to decide the committee members, i.e., the voters, who will eventually make some crucial decisions. Another type of election control that is often observed in political elections is adding some ``spoiler'' candidates to create confusion among voters or persuade some candidates to withdraw who are poorly performing in pre-election polls.

This work focuses on online voting. In this setting, there is a social network on the set of voters. A voter initiates an election (online poll or survey, for example) and asks her friends to participate in it. Depending on the social network platform, there could be various ways to do this job. For example, on a Facebook network, a voter can post a poll on her wall that, when her friends see it, participate in it. When their friends participate in that poll, their friends will see it and can participate in it. If everyone who sees the poll participates in it, then, if the social network is connected, then everyone in the network participates in it. However, this is rarely the case because some people may not participate in the poll even after seeing it. Hence, only a connected subset of people, including the poll initiator, participates in that poll. Since the outcome of such polls can be used for important decision-making tasks, it may be possible that such elections come under attack. We study an important type of attack on such elections: persuading some voters not to participate in the election, thereby controlling the set of voters that eventually participate in the election. Such a control attack can be done for various purposes, for example, confusing the pollster, helping some candidate to win/lose the election, etc. In this paper, we focus on the objective of helping some candidate win/lose the election under a complexity-theoretic lens.

\section{Related Work}
\label{sec:related}

In this section, we discuss some of the most important work in electoral control. Bartholdi et al. \cite{BARTHOLDI199227} initiated the study of electoral control and looked into it from various computational perspectives. The chair may exercise control over the candidate set by removing up to $k$ candidates from the election or by inserting new candidates from a list of spoiler candidates. Hemaspaandra et al. \cite{Hemaspaandra2009-HEMHEB} later defined a variant of the problem where the number of spoiler candidates that might be added by the election controller has a bound $k$. Many voting rules, for example, Fallback and Bucklin \cite{DBLP:conf/cats/ErdelyiR10}, $\text{Copeland}^{\alpha}$ \cite{10.5555/1641503.1641510}, Normalized Range Voting \cite{10.1007/s00224-012-9441-0}, SP-AV voting \cite{https://doi.org/10.1002/malq.200810020} and Schulze Voting \cite{DBLP:conf/ijcai/MentonS13} are resistant to all types of constructive control. Bodlaender \cite{10.5555/646242.681417} showed that intractable computational problems on graphs usually become tractable if the treewidth is bounded by a constant. The idea of safety in electoral control issues has been studied previously by Slinko and White, who examined the class of social choice functions that can be safely manipulable and put forth the idea of safety in the context of manipulation \cite{slinko2008nondictatorial,DBLP:journals/scw/SlinkoW14}. The authors Hazon and Elkind \cite{10.1007/978-3-642-16170-4_19} and Ianovski et al. \cite{DBLP:conf/ijcai/IanovskiYEW11} have looked into the complexity of safely manipulating popular voting rules. Faliszewski et al. \cite{DBLP:conf/aaai/FaliszewskiHH06} have shown that in a bribery problem, a briber, who can be an election controller, can change the minimum number of preferences to make way for a preferred candidate to win the election. Bredereck and Elkind \cite{ijcai2017p124} analysed the computational complexity of bribery and control by adding/deleting links between users on an online social network and altering the order in which voters update their opinions. Goles and Olivos \cite{DBLP:journals/dm/GolesO80} showed that a sequence of at most $\OO(n^2)$ synchronous updates, where $n$ is the number of voters, always converges to a stable state. Frischknecht et al. \cite{10.1007/978-3-642-41527-2_30} strengthened the tightness of the stated result. Wilder and Vorobeychik \cite{10.5555/3237383.3237428} showed hardness, inapproximability, and algorithmic results for the problem of constructive control and destructive control through social influence. Opinion dynamics and social choice have been extensively studied by \cite{6736702,10.1007/s10458-013-9230-4,10.5555/2832415.2832532,10.5555/2936924.2936979}. We refer to \cite{grandi2017social} for a comprehensive but not-so-recent survey of opinion diffusion.

\section{Preliminaries and problem definitions}
\label{sec:prelims}
We consider the plurality voting rule in this paper. For an integer $g$, we denote the set $\{ 1,\ldots, g \}$ by $\left[ g \right]$ and $\left[ g \right]_{0} = \left[ g \right] \cup \{ 0 \}$. Let $\CC = \{c_1,\ldots,c_m\}$ be a set of candidates and $\VV = \{v_1 , \ldots, v_n\}$ be a set of voters. Each voter $v_i$ votes for a candidate in $\CC$ that is given by a voting function $\tau:\VV\to\CC$. Thus, the voter $v\in \VV$ votes for the candidate $\tau(v) \in \CC$. The triple $(\CC, \VV, \tau)$ is called an election. A candidate $c \in \CC$ is declared a winner of the election if $c\in\arg \max_{c' \in \CC}|\{v \in \VV \mid \tau(v)=c'\}|$. If there is only one winner $c$ of an election, then $c$ is said to win the election uniquely. The mathematical formalism to express an election for a general voting rule is more elaborate; we do not include that in this paper.
\par We consider two problems where the election controller wants to control the winner of a given election, by deleting some votes within her budget limit. In the \CN problem, the election controller wishes to make her preferred candidate unambiguously win the election. In the \DCN problem, the election controller wishes to stop a specific candidate to win the election. The problems are formally defined in \Cref{def:CN}.
\begin{probdefinition}[\sc{Constructive and Destructive Control over Network}]
\label{def:CN}
  We are given a set $\CC=\{ c_1,\ldots, c_m \}$ of $m$ candidates, a set $\VV=\{ v_1, \ldots, v_n \}$ of $n$ voters, a voting function $\tau: \VV \to \CC$, an undirected graph $\GG=(\VV,\EE)$ whose vertices are the voters, a target candidate $c \in \CC$ of the controller, a voter $x\in\VV$ who conducts the election, a cost function $\pi:\VV \longrightarrow \RB_{\ge 0}$, and a budget \BB of the controller. We extend the definition of $\pi$ to subsets of $\VV$, and define the cost $\pi(\KK)$ of a subset $\KK \subseteq \VV$ as $\sum_{v \in \KK}\pi(v)$. We say that a subset $\WW \subseteq \VV \setminus \{ x \}$ is ``budget feasible'' if $\sum_{v \in \WW} \pi(v) \le \BB$. For a budget feasible set $\WW$, let $\HH_\WW$ be the set of nodes reachable from $x$ in $\GG \setminus \WW$.
  
  The \CN problem asks whether there exists a budget feasible set $\WW$ such that $c$ wins uniquely in the restricted election $(\CC, \VV, \tau)$ where only votes of $\HH_\WW$ are counted.

  The \DCN problem asks whether there exists a budget feasible set $\WW$ such that a candidate other than $c$ unambiguously wins in the restricted election $(\CC, \VV, \tau)$ where only votes of $\HH_\WW$ are counted.
\end{probdefinition}
A generic input to both \CN and \DCN is a tuple $(\CC, \VV, \tau, \GG, c, x, \pi, \BB)$.

We also consider the special setting where the budget is infinite. In this setting, the cost function $\pi$ is irrelevant, and any subset $\WW \subseteq \VV \setminus \{ x \}$ trivially satisfies the budget constraint. We refer to the corresponding versions as \BCN and \BDCN. An instance of the budgetless versions is a tuple $(\CC, \VV, \tau, \GG, c, x)$, where the members of the tuple are as per \Cref{def:CN}.

Observe that an efficient algorithm $\mathcal{A}$ for \CN can be used to design an efficient algorithm for \DCN as follows: examine each candidate $c'$ other than $c$ in turn, and decide whether $c'$ can be made a unique winner within the given budget by invoking $\mathcal{A}$.

In some of our results, the problem \XThC is used, that we define below formally.
 \begin{definition}[\XThC]
 \label{dxef:xthc}
 For a positive integer $\ell$, let $\UU\coloneqq \{1, \ldots, 3\ell\}$. We are given $m$ subsets $S_1, \ldots, S_m$ of $\mathcal{U}$, each of cardinality $3$ such that $\cup_{i \in [m]}S_i=\UU$. Furthermore, each element in $\UU$ belongs to exactly two sets in the collection $\{S_1, \ldots, S_m\}$. Decide whether there exists a subset $A \subseteq [m]$ of size $\ell$ such that $\cup_{i \in A} S_i=\mathcal{U}$. Note that if such a set $A$ exists, then the sets $\{S_i \mid i \in A\}$ are pairwise disjoint. We denote an instance of \XThC as $(\ell, S_1, \ldots, S_m)$.
 \end{definition}
 \XThC is known to be NP-complete \cite{LT06}.


\subsection{Tree decomposition and treewidth}
\label{sec:tw}
In this section, we formally define tree decomposition and treewidth. See \cite{DBLP:books/sp/CyganFKLMPPS15} for more details.
\begin{definition}[Treewidth]
\label{def:treewidth}
        Let $\GG = (V_{\GG}, E_{\GG})$ be a graph.  A {\em tree-decomposition} of $\GG$ is a pair $(T = (V_T, E_T),\mathcal{X}=\{X_{t}\}_{t\in V_T})$,  where $T$ is a tree where every node $t\in V_T$ is assigned a subset $X_t\subseteq V_{\GG}$, called a bag,  such that the following conditions hold.
        \begin{itemize}
            \item $\bigcup_{t\in V_T}{X_t}=V_{\GG}$,
            \item for every edge $\{x,y\}\in E_{\GG}$ there is a $t\in V_T$ such that  $x,y\in X_{t}$, and
            \item for any $v\in V_{\GG}$ the subgraph of $T$ induced by the set  $\{t\mid v\in X_{t}\}$ is connected.
        \end{itemize}

        The {\em width} of a tree decomposition is $\max_{t\in V_T} |X_t| -1$. The {\em treewidth} of $\GG$, denoted by $\ttt\www(\GG)$, is the  minimum width over all tree decompositions of ${\GG}$.

        A tree decomposition  $(T,\mathcal{X})$ is called a {\em nice edge tree decomposition} (we will also refer to it as \emph{nice tree decomposition}) if $T$ is a tree rooted at some node $r$ where $ X_{r}=\emptyset$, each node of $T$ has at most two children, and each node is of one of the following kinds:
        \begin{itemize}
            \item {\bf Introduce node}: a node $t$ that has only one child $t'$ where $X_{t}\supset X_{t'}$ and  $|X_{t}|=|X_{t'}|+1$.
            \item {\bf Introduce edge node}: a node $t$ labeled with an edge $\{u, v\} \in E_{\GG}$, with only one child $t'$ such that $\{u,v\}\subseteq X_{t'}=X_{t}$. This bag is said to introduce the edge $\{u, v\}$.
            \item {\bf  Forget vertex node}: a node $t$ that has only one child $t'$  where $X_{t}\subset X_{t'}$ and  $|X_{t}|=|X_{t'}|-1$.
            \item {\bf Join node}: a node  $t$ with two children $t_{1}$ and $t_{2}$ such that $X_{t}=X_{t_{1}}=X_{t_{2}}$.
            \item {\bf Leaf node}: a node $t$ that is a leaf of $T$, and $X_{t}=\emptyset$.
        \end{itemize}
        We additionally require that every edge is introduced exactly once. One can show that a tree decomposition of width $w$ can be transformed into a nice tree decomposition of the same width $w$ and with $\OO(w |V_\GG|)$ nodes in polynomial time, see~e.g.~\cite{BODLAENDER201842,DBLP:books/sp/CyganFKLMPPS15}.


        \end{definition}
Suppose $T'$ is a nice tree decomposition of $\GG$ of width $w$, and $x \in V_{\GG}$ be a vertex of $\GG$. One can easily convert $T'$ to another tree decomposition $T$ of $\GG$ of width at most $w+1$ with the following properties:
\begin{itemize}
\item The parent of each leaf of $T$ is an Introduce node where the vertex $x$ is introduced.
\item The vertex $x$ is never forgotten. In particular, $x$ is part of the bag of each internal node of $T$.
\end{itemize}
$T$ can be obtained from $T'$ as follows:
\begin{enumerate}
\item Include $x$ in the bag of each internal node of $T'$.
\item For each leaf $\ell$, if its parent $p$ is not an Introduce node where $x$ is introduced, then create an Introduce node $t$ where $x$ is introduced, make $t$ the parent of $\ell$ and $p$ the parent of $t$.
\item If $t$ is a vertex that is not the parent of a leaf, which is either an Introduce node in $T'$ where $x$ is introduced or a Forget vertex node where $x$ is forgotten, then merge it with its child.
\end{enumerate}
Assuming that there is no join node with an empty bag in $T'$, it can be easily verified that each node of $T$ is of one of the four kinds listed in the definition of a nice tree decomposition. The only condition that $T$ violates in the definition of a nice tree decomposition is that the bag of the root is empty. Nevertheless, we abuse notation and also refer to $T$ as a nice tree decomposition of $\GG$.
\section{Our Contribution}
\label{sec:contrib}
In this work, we study the computational complexity of \CN and \DCN.

Our first result shows that \DCN admits a polynomial time algorithm for the special case where the treewidth (see \Cref{sec:tw} for related definitions) of the graph $\GG$ is a constant.
\begin{theorem}
\label{thm:easy-low-tw-dcn}
There exists an algorithm that, given an input $(\CC, \VV, \tau, \GG, c, x, \pi, \BB)$ of \DCN and a tree decomposition $T$ of the graph $\GG$ of width $w$, solves \DCN in time $w^{\OO(w)}\cdot n^{\OO(w)}\cdot \mathsf{poly}(n, m)$. In particular, \DCN admits a polynomial time algorithm when $\GG$ is a tree.
\end{theorem}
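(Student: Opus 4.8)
The plan is to first recast \DCN as a purely combinatorial selection problem on $\GG$ and then solve that problem by dynamic programming along the supplied tree decomposition. The key preliminary observation is a clean formula for the cheapest way to realise a given reachable set. For a connected vertex set $H\ni x$, write $N(H)$ for its neighbourhood; to realise $H$ as $\HH_\WW$ one must delete every vertex of $N(H)\setminus H$ (otherwise some such neighbour would still be reachable from $x$ and would enlarge the component), and deleting exactly this boundary already seals $H$. Hence the minimum cost of realising $H$ is $\pi(N(H)\setminus H)$, and since $x\in H$ this set avoids $x$ and is always admissible to delete. Conversely every choice of $\WW$ produces a connected $H=\HH_\WW$ containing $x$. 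So \DCN is equivalent to asking whether there is a connected $H\ni x$ with $\pi(N(H)\setminus H)\le\BB$ whose restricted election on $H$ has a unique winner different from $c$. I would also first convert the input decomposition into the nice form of \Cref{sec:tw}, so that $x$ sits in every internal bag and is never forgotten; this adds only $1$ to the width and lets the algorithm always maintain the component of $x$.

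Next I would turn the winner condition into a family of feasibility questions. A candidate $c'\neq c$ is the unique winner of $H$ precisely when $\mathrm{score}_H(c')\ge s$ and $\mathrm{score}_H(c'')\le s-1$ for every other $c''$, where $s=\mathrm{score}_H(c')$; looping $s$ over $\{1,\dots,n\}$ and using the pair of inequalities $(\ge s,\ \le s-1)$ captures exactly this condition. I would therefore iterate over the $m-1$ candidates $c'\neq c$ and the $n$ thresholds $s$, and for each pair decide by dynamic programming over the nice tree decomposition whether some connected $H\ni x$ attains $\mathrm{score}_H(c')\ge s$, keeps every other candidate at most $s-1$, and has boundary cost at most $\BB$. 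A DP state at a bag records which bag vertices lie in $H$, the partition of those vertices into the components they currently occupy (this is the $w^{\OO(w)}$ factor and is what enforces connectivity), and, for each excluded bag vertex, one bit saying whether it has already acquired a neighbour in $H$ and thus belongs to the boundary; the stored value is the minimum boundary cost achieving the state. The transitions at introduce, introduce-edge, forget, join and leaf nodes are the standard ones for connected-subgraph dynamic programming: partitions are merged and tallies added at joins, boundary bits are set at introduce-edge nodes, and a component is declared infeasible as soon as it loses all of its bag vertices without having joined the component of $x$.

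The crux, and the reason the destructive problem is easier than the constructive one, is enforcing the $m-1$ upper-bound constraints with a state whose size does not depend on $m$. Storing each candidate's running tally in every component would cost a factor $n^{\Theta(mw)}$, which is exactly what pins the constructive algorithm to $\OO(1)$ candidates. For the destructive version the state would instead carry, per component, only the tally of the single fixed candidate $c'$, contributing the $n^{\OO(w)}$ factor, alongside the connectivity partition and the boundary bits. The difficulty I expect to spend the most effort on is discharging the caps on all the other candidates without ever recording their counts: I would try to process the threshold $s$ monotonically and argue that, once $c'$ and $s$ are fixed, an over-represented candidate can always be repaired by leaving some of its voters outside $H$ at no contribution to $\mathrm{score}_H(c')$, so that the existence of a cap-respecting $H$ of minimum boundary cost can be certified from the winner-only DP. Establishing this decoupling of the per-candidate caps from the connectivity and cost bookkeeping is the main obstacle; granting it, the outer loops over $c'$ and $s$ multiply the DP cost by only $\mathsf{poly}(n,m)$, which yields the stated bound $w^{\OO(w)}\cdot n^{\OO(w)}\cdot \mathsf{poly}(n,m)$ and, for $w=1$, a polynomial-time algorithm on trees.
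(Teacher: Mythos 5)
Your reduction to a connected-subgraph selection problem (the boundary-cost formula $\pi(N(H)\setminus H)$, the boundary bits, the outer loop over $c'$) is sound as far as it goes, but the step you yourself flag as the crux --- discharging the caps on the other candidates while storing only $c'$'s tally --- is not a deferrable technicality; it is the entire difficulty, and the repair argument you propose for it is false. Leaving an over-represented candidate's voters outside $H$ is not connectivity-neutral: those voters may be cut vertices separating $x$ from the voters of $c'$. Concretely, take the path $x,v_1,v_2,v_3,v_4$ with $\tau(x)=c$, $\tau(v_1)=\tau(v_2)=c''$, $\tau(v_3)=\tau(v_4)=c'$ and infinite budget: for the pair $(c',s=2)$ your winner-only DP accepts $H$ equal to the whole path, yet no connected $H\ni x$ makes $c'$ the unique winner, because any $H$ containing a $c'$-voter contains both $c''$-voters. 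Worse, with zero budget the path $a_1,a_2,x,b_1,b_2$ with $\tau(a_i)=c'$, $\tau(b_i)=c''$, $\tau(x)=c$ forces $H=\VV$, where $c'$ and $c''$ tie at the top; there your algorithm (granting the repair claim) outputs \textsf{YES} for the unique-winner question whose true answer is \textsf{NO}. Note also that since your state stores only $c'$'s tally, you cannot even certify that $c'$ beats $c$: the count of $c$ itself is one of the caps you never enforce.

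The paper avoids all of this by aiming at a weaker per-candidate target. For each $c'\neq c$ it decides \CB$(c')$: does there exist a budget-feasible $\WW$ under which $c'$ garners \emph{strictly more votes than $c$}? This requires no caps on any other candidate, and it is exactly the condition for $c$ to fail to be a plurality winner (someone beats $c$ iff $c$ is not in the argmax). The DP state then records, per component meeting the bag, the single signed quantity $a_{c'}-a_c\in[-n,n]$ --- the vote difference between $c'$ and $c$ --- alongside the connectivity partition, and the root query asks for a positive difference realizable with cost at most $\BB$. This is what yields $w^{\OO(w)}\cdot n^{\OO(w)}$ states with no dependence on $m$ in the exponent. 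Be aware that this proves the theorem for the ``prevent $c$ from winning'' reading of \DCN rather than for the literal ``some other candidate unambiguously wins'' phrasing of \Cref{def:CN}; the two readings differ precisely on tie outcomes such as the zero-budget example above. Your proposal chases the literal phrasing, and that choice --- not your DP machinery, which otherwise parallels the paper's --- is what makes the plan break down: the missing idea is to track pairwise differences against $c$ and demand only that $c$ be beaten.
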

How about \CN? As discussed in \Cref{sec:prelims}, \CN is computationally at least as hard as \DCN (up to a factor of $m$). Our next result obtains a polynomial time algorithm for \CN with an assumption of the treewidth of $\GG$ being a constant as in \Cref{thm:easy-low-tw-dcn}, and an additional assumption that the number of candidates $m$ is a constant.
\begin{theorem}
\label{thm:easy-low-tw-const-cand}
There exists an algorithm that, given an input $(\CC, \VV, \tau, \GG, c, x, \pi, \BB)$ of \CN and a tree decomposition $T$ of the graph $\GG$ of width $w$, solves \CN in time $w^{\OO(w)}\cdot n^{\OO(mw)}\cdot \mathsf{poly}(n, m)$. In particular, \CN admits a polynomial time algorithm when $\GG$ is a tree, and the number of candidates $m$ is a constant.
\end{theorem}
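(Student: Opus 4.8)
The plan is to design a dynamic program over the modified nice tree decomposition $T$ constructed in \Cref{sec:tw}, in which the initiator $x$ belongs to every internal bag and is never forgotten. The crucial observation is that for a budget-feasible deletion set $\WW$ (with $x \notin \WW$), the set $\HH_\WW$ that determines the restricted election is exactly the connected component of $x$ in the graph obtained by keeping the undeleted vertices; hence the outcome is governed by the per-candidate tally of the component containing $x$. Because $x$ always sits in the current bag, the DP can always identify the block representing $x$'s component. For a node $t$ of $T$ with bag $X_t$, let $G_t$ denote the subgraph induced by the vertices and edges introduced in the subtree rooted at $t$. A DP state at $t$ records: (i) a subset $K \subseteq X_t$ of \emph{kept} bag vertices (so $X_t \setminus K$ are the deleted ones, and we force $x \in K$); (ii) a partition $\Pi$ of $K$ where two vertices lie in the same block iff they are connected within $G_t$ using only kept vertices; and (iii) for every block $B \in \Pi$ a vote vector $\vec{u}_B \in [0,n]^m$ counting, per candidate, how many kept vertices of the $G_t$-component of $B$ vote for that candidate. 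The value stored for a state is the minimum total cost $\sum_v \pi(v)$ over deletions in the subtree realising that state.

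I would then fill the table bottom-up. At an \emph{introduce} node for $v \neq x$ we branch on deleting $v$ (pay $\pi(v)$, leave $\Pi$ unchanged) or keeping $v$ (add a fresh singleton block with vote vector $e_{\tau(v)}$); for $v = x$ only the keep branch is allowed. At an \emph{introduce-edge} node for $\{u,v\}$, if both endpoints are kept we merge their blocks and add the corresponding vote vectors, and otherwise nothing changes. At a \emph{forget} node for $v$ we drop $v$ from its block; if $v$ was the last bag vertex of its component, that component can never again meet a future vertex (by the tree-decomposition property) and, not containing $x$, can never reach $x$, so its block and vote vector are discarded. At a \emph{join} node we combine a pair of child states with the same kept set $K$ by taking the transitive closure of $\Pi_1 \cup \Pi_2$ and, for each resulting block, summing the child vote vectors while subtracting $e_{\tau(v)}$ once for every kept bag vertex $v$ in the block (and reconciling costs analogously), correcting for the fact that each bag vertex is accounted for in both children. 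At the root, whose bag is $\{x\}$, we read off the vote vector of the single block $\{x\}$ and accept iff some reachable state makes $c$ the strict maximiser at cost at most $\BB$.

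For the running time, the number of (kept-set, partition) pairs on a bag of size at most $w+1$ is $w^{\OO(w)}$, and each of the at most $w+1$ blocks carries one of $(n+1)^m$ vote vectors, giving $w^{\OO(w)} \cdot n^{\OO(mw)}$ states per node; transitions, including pairing up child states at join nodes, cost at most $w^{\OO(w)} \cdot n^{\OO(mw)} \cdot \mathsf{poly}(n,m)$ per node, and $T$ has $\OO(wn)$ nodes, yielding the claimed bound. Taking $w = \OO(1)$ and $m = \OO(1)$ (in particular $w = 1$ for a tree) makes this polynomial.

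The main obstacle I expect is twofold. First, the join node must avoid double-counting: each bag vertex is independently introduced in both branches below a join, so its vote (and any deletion cost) appears in both child states and must be subtracted exactly once upon merging; getting this inclusion-exclusion right for every block of the merged partition is the delicate bookkeeping step, and correctness of discarding finalised non-$x$ components must be argued from the separation property of tree decompositions. Second, it is tempting to track only $c$'s tally together with the largest rival tally, which would remove the factor $m$ from the exponent; this fails precisely because, when two components merge, the candidate attaining the rival maximum may differ between them, so the maxima do not add. This forces us to retain the full $m$-dimensional vote vector per block, which is exactly why the constructive version needs the extra assumption $m = \OO(1)$, in contrast to \DCN.
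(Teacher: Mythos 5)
Your proposal is correct and takes essentially the same route as the paper's own proof: a dynamic program over the nice tree decomposition (modified so that $x$ is in every internal bag and never forgotten) whose states record the kept bag vertices, their partition into connectivity classes, and a per-block, per-candidate vote vector for the full component, with the same delete/keep branching at introduce nodes, block merging at introduce-edge nodes, discarding of finalised non-$x$ components at forget nodes, and inclusion-exclusion correction for doubly-counted bag vertices (both in votes and in cost) at join nodes. The state space, transitions, root check, and $w^{\OO(w)}\cdot n^{\OO(mw)}\cdot \mathsf{poly}(n,m)$ running-time analysis all match the paper's argument.
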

\Cref{thm:easy-low-tw-dcn} and \Cref{thm:easy-low-tw-const-cand} assume that a tree decomposition of low treewidth is given as a part of the input. However, this assumption is not restrictive; it is known that given a graph $\GG$ with $n$ vertices it is possible to construct a tree decomposition of $\GG$ of width $\ttt\www(\GG)$ in time $\OO(f(\ttt\www(\GG))\cdot n)$, where $f(\cdot)$ is a quasi-polynomially growing function \cite{B96, DBLP:books/sp/CyganFKLMPPS15}.

Are the assumptions in \Cref{thm:easy-low-tw-dcn} and \Cref{thm:easy-low-tw-const-cand} necessary? We answer this question in the affirmative by two hardness results. The first one shows that \CN is NP-complete even when the graph $\GG$ is a tree (i.e. has treewidth $1$) and the setting is budgetless.
\begin{theorem}
\label{thm:hardness-tree}
\BCN is NP-complete for the special case where the graph $\GG$ is a tree.
\end{theorem}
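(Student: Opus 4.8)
Membership in \textsf{NP} is immediate: a certificate is the deletion set $\WW$; one computes $\HH_\WW$ (the connected component of $x$ in $\GG\setminus\WW$) by a single traversal and checks in polynomial time whether $c$ is the unique plurality winner among the votes in $\HH_\WW$. For hardness I would reduce from \XThC. The guiding dictionary is that, since $\GG$ is a tree rooted at $x$, the set $\HH_\WW$ is exactly a connected subtree containing $x$, and conversely every such subtree is realizable; thus the controller's move is ``choose a connected subtree through $x$'', equivalently ``for each node decide keep/cut, where cutting a node deletes its whole pending subtree''. I would encode a choice of sub-collection $A\subseteq[m]$ by which per-set gadgets are cut. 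Because \Cref{thm:easy-low-tw-const-cand} already solves \CN on trees in polynomial time for $\OO(1)$ candidates, the reduction is forced to use $\Theta(\ell)$ candidates, and indeed I would introduce one candidate $d_e$ per universe element $e\in\UU$.

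\emph{Construction sketch.} Let $x$ vote for a dummy candidate $\delta$ that nobody else supports, so $\delta$ keeps the single un-deletable vote and cannot win once $c$ exceeds $1$. For each set $S_i=\{e_1,e_2,e_3\}$ I build a path gadget hanging from $x$: a block of $K:=2\ell$ voters for $d_{e_1}$, then a block of $K$ voters for $d_{e_2}$, then a block of $K$ for $d_{e_3}$, and finally a single leaf voting $c$. Each $d_e$ thus has exactly two weighted copies, one in each of the two gadgets containing $e$ (this is where the regularity hypothesis is used). Cutting the top of gadget $i$ deletes all three $d$-blocks and the $c$-leaf; this is the ``select $S_i$'' move. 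The element candidates force coverage: if both copies of some $e$ survive then $d_e=2K$, and I will choose the numbers so that $c$ can never reach $2K$, whence the cut gadgets must cover every element.

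Forcing \emph{exactly} $\ell$ cuts is the crux, and the naive design fails, because if cutting a set costs $c$ a block then cutting nothing maximises $c$ while violating coverage, and the target's score scales with the surviving gadgets in lockstep with the element candidates, so ``keep all'' and a genuine cover become indistinguishable. To break this proportionality I would add a \emph{threshold gadget}: $\ell+1$ private branches $x-p_j-q_j$ with $p_j$ voting for a rival $\rho$ and its child $q_j$ voting $c$. Keeping a branch gives $\rho$ and $c$ one vote each, and cutting it lowers both by one, so the controller can never improve the $c$-versus-$\rho$ margin and will keep all branches; this pins $\rho=\ell+1$ while giving $c$ a matched base of $\ell+1$. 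With the single $c$-leaf per gadget one then gets $c=3\ell+1-t$ and $\rho=\ell+1$, where $t$ is the number of cut gadgets. The matched base is the key point: cutting everything yields $c=\rho=\ell+1$, a tie that fails, so neither extreme trivially elects $c$.

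With these weights the winning condition becomes exactly an exact cover. Since $c>2K=4\ell$ is impossible, every element must be covered, forcing $t\ge\ell$; meanwhile for any element covered exactly once, $c>d_e=2\ell$ reduces to $t\le\ell$; and covering every element with $t\ge\ell+1$ is shown always to leave some once-covered element, via the double count $\sum_e u_e=6\ell-3t$ (where $u_e$ is the number of surviving copies of $e$), which re-imposes $t\le\ell$, a contradiction. Hence $t=\ell$ with every $u_e=1$, i.e.\ an exact cover, and conversely an exact cover gives $c=2\ell+1$ strictly above $\rho=\ell+1$ and every $d_e=2\ell$. The main obstacle I anticipate is verifying that \emph{partial} cuts (cutting a gadget below its top, deleting only a suffix of its $d$-blocks together with the $c$-leaf) cannot cheat; I would handle this by observing that covering an element through its gadget costs that gadget's $c$-leaf regardless of where the cut is made, so the total ``$c$-leaf budget'' spent equals the number $t$ of touched gadgets, after which the same counting argument forces $t=\ell$ with full, pairwise disjoint cuts. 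Together with the stability of the threshold gadget, this yields the equivalence and hence NP-completeness of \BCN on trees.
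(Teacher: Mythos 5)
Your proposal is correct and uses the same overall strategy as the paper: a polynomial reduction from \XThC in which each set $S_i$ becomes a path gadget hanging off $x$, the universe elements become candidates, and a counting argument forces any successful deletion set to consist of exactly $\ell$ whole-gadget cuts forming an exact cover. The internal balancing mechanism, however, is genuinely different. The paper places in \emph{every} gadget one voter per element of $S_i$ \emph{plus} a block of $3\ell$ voters (one per element of $\UU$) and one $c$-voter, so that each cut gadget lowers every element candidate's tally in lockstep with $c$'s; the floor is a candidate $d$ whose $m+1-\ell$ votes sit on the un-deletable spine containing $x$, and every violation manifests as an exact tie ($|\WW|\ge \ell+1$ ties $c$ with $d$, an uncovered element ties with $c$). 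You instead use heavy blocks of $2\ell$ voters per element occurrence (so an element with both copies surviving is simply unbeatable, since $4\ell \ge 3\ell+1$), a dummy $\delta$ at $x$, and a rival $\rho$ pinned at $\ell+1$ by a domination argument, with the contradiction for $t\ge \ell+1$ coming from the double count $\sum_e u_e = 6\ell-3t$ rather than from ties; your arithmetic also silently uses the regularity identity $m=2\ell$, which is fine but worth stating. One step needs tightening: your counting is phrased in terms of surviving \emph{copies} $u_e$, which is not well-defined under partial cuts, and the ``$c$-leaf budget'' remark by itself does not restore the count. The clean fix is exactly the paper's one-line WLOG: since any cut inside a gadget costs $c$ precisely its one leaf, replacing that cut by a cut at the gadget's head leaves $c$'s tally unchanged and weakly decreases every other candidate's; combined with restoring all threshold branches (which increases $c$ by at least as much as $\rho$, so $c-\rho$ cannot decrease), any solution can be converted to a canonical full-cut solution, after which your $u_e$ count is exact and the equivalence goes through.
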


Our next result shows that both the problems are NP-complete even in the special case where there are two candidates and the setting is budgetless.
\begin{theorem}
\label{thm:hardness-2cand}
\BCN and \BDCN are NP-complete even for the special case where there are two candidates (i.e.~$m=2$).
\end{theorem}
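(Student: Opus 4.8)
The plan is to prove \textsf{NP}-hardness by a reduction from \XThC, and to handle \BCN and \BDCN with a single construction. Membership in \textsf{NP} is immediate: a deleted set $\WW$ is a polynomial-size certificate, and one verifies in polynomial time that the target candidate (resp.\ some non-target candidate) is the unique plurality winner of the election restricted to $\HH_\WW$. For hardness, observe that when $m=2$ the two problems essentially coincide: making the non-target candidate the unique winner is exactly a constructive win for that candidate. Hence a reduction that forces a \emph{designated} one of the two candidates to win uniquely in $\HH_\WW$ settles both statements at once (reading the designated candidate as $c$ gives \BCN, reading it as ``not $c$'' gives \BDCN).

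First I would record the structural fact that, in the budgetless setting, the attainable sets $\HH_\WW$ are exactly the connected vertex subsets containing $x$ (take $\WW=\VV\setminus H$ for any connected $H\ni x$). Writing each $c_1$-voter as a $+1$ prize and each $c_2$-voter as a $-1$ cost, \BCN with target $c_1$ then asks precisely whether some connected subgraph containing $x$ has strictly positive total weight; this is a rooted maximum-weight-connected-subgraph question with $\pm1$ weights and threshold $0$. The whole task is therefore to engineer a graph in which the sign of this optimum encodes the existence of an exact cover. This must genuinely exploit the graph structure (not a tree), consistently with \Cref{thm:easy-low-tw-const-cand}.

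Given an instance $(\ell,S_1,\dots,S_m)$ of \XThC, I would build $\GG$ as follows. The root $x$ is attached, through a long series of mandatory ``toll'' vertices voting $c_2$, to a hub $r$; every reward lies behind $r$, so any subgraph that collects anything must pay for this entire series. For each set $S_i$ a single vertex $a_i$ voting $c_2$ is attached to $r$; including $a_i$ is the act of ``selecting'' $S_i$ and costs $1$. For each element $u\in\UU$ a \emph{prize cloud} of $\beta$ vertices voting $c_1$ is attached to a vertex $e_u$, and $e_u$ is joined to $a_i$ for every $S_i\ni u$; thus the cloud of $u$ is reachable from $x$ iff some selected set covers $u$. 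Crucially the $a_i$ hang independently off $r$ and each $e_u$ is a pendant-like attachment, so there are no shortcuts: the only route to the prize of $u$ is through a selected covering set. For a selection $A$ the optimal connected subgraph collects exactly the prizes of $\bigcup_{i\in A}S_i$, giving weight $\beta\cdot\bigl|\bigcup_{i\in A}S_i\bigr|-|A|-\Delta$, where $\Delta$ is the fixed toll of the mandatory series.

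The key quantitative point is to choose $\beta$ larger than the number of sets, so that paying for one extra selected set is always worth covering even a single new element; then every optimal subgraph covers all of $\UU$ and, among full covers, uses a \emph{minimum} number of sets. Since in \XThC any cover needs at least $\ell$ sets and a cover of size exactly $\ell$ is forced to be an exact cover, the minimum cover number equals $\ell$ iff an exact cover exists. It remains to fix the toll $\Delta$ so that the strict-majority threshold falls between the two cases: choosing $\Delta$ so that a size-$\ell$ full cover yields weight exactly $1$ while a size-$(\ell+1)$ full cover yields weight at most $0$, the designated candidate wins uniquely in some $\HH_\WW$ iff the instance is a yes-instance. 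I expect the main obstacle to be exactly this calibration: proving rigorously that no partial cover and no wasteful selection can slip above the threshold, so that a strict majority is attainable \emph{only} through a genuine exact cover, and checking that the required $\beta$ and $\Delta$ remain polynomially bounded. The destructive statement then follows verbatim by swapping the roles of the two candidates, as noted above.
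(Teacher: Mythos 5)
Your proposal is correct and follows essentially the same route as the paper's proof: a reduction from \XThC in which a toll path of opponent voters connects $x$ to a hub, set-selection vertices voting the opponent hang off the hub, and element prizes voting the preferred candidate are reachable only through selected sets, calibrated so that a unique win is attainable \emph{if and only if} an exact cover of size $\ell$ exists (the paper uses $m$ prize copies per element and a toll path of length $\ell(3m-1)$ where you use a cloud of $\beta$ prizes behind an element hub and a toll $\Delta$). The calibration you flag as the main obstacle does go through routinely: taking $\beta=m+1$ and $\Delta=3\ell\beta-\ell-1$, a full cover with $|A|$ sets has weight $\ell+1-|A|$, while any selection covering at most $3\ell-1$ elements has weight at most $\ell+1-\beta-|A|\le 0$, so strictly positive weight forces a full cover by exactly $\ell$ sets, i.e.\ an exact cover.
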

Observe that for $m=2$ the two problems are equivalent in the following sense. Let $\CC=\{0,1\}$. Then for each $b \in \CC$, $(\CC, \VV, \tau, \GG, b, x, \pi, \BB)$ is a \textsf{YES} instance of \CN \emph{if and only if} $(\CC, \VV, \tau, \GG, 1-b, x, \pi, \BB)$ is a \textsf{YES} instance of \DCN. Thus, in order to prove \Cref{thm:hardness-2cand} it suffices to prove the NP-hardness of \CN for $m=2$ in the budgetless setting.

Our hardness results presented in \Cref{thm:hardness-tree} and \Cref{thm:hardness-2cand} thus complement our algorithmic results presented in \Cref{thm:easy-low-tw-dcn} and \Cref{thm:easy-low-tw-const-cand}, and establish the unavoidability of the assumptions made in them. This paper presents a comprehensive study of the computational complexity of \CN and \DCN. In \Cref{sec:conclusion} we state some open questions for future research.
\section{Algorithm for \CN}
\label{algo:cn}
In this section, we prove \Cref{thm:easy-low-tw-const-cand}. Our proof exhibits an explicit algorithm for \CN based on dynamic programming. We assume that the tree decomposition $T$ of $\GG$ supplied as input is \emph{nice} (see \Cref{sec:tw}). Our DP table has a set of cells for each node $t$ of $T$. Each cell is additionally  indexed by a partition $\calp=(\calp_1, \ldots, \calp_\ell)$ of a subset of the bag $X_t$ of $t$, and a tuple of numbers $\overline{n}=(n_{i,j})_{i \in [\ell], j \in [m]}$. The vertices and edges of $\GG$, introduced in the subtree of $T$ rooted at $t$, induce a subgraph $G_t$ of $\GG$. In each step, the algorithm computes a cheapest set of vertices of $G_t$. Removal of such a cheapest set of vertices from $G_t$ breaks up $G_t$ into connected components whose intersections with the bag $X_t$ are given by $\calp$, and whose vote counts for each candidate are given by $\overline{n}$. Once the table is filled up, the table entries for the root of $T$ can be consulted to compute the output of the algorithm. We now proceed to the formal proof.
\begin{proof}[Proof of \Cref{thm:easy-low-tw-const-cand}]
Let $(\CC,\VV,\tau,\GG,c, x, \pi,\BB)$ be an instance of \CN. Let $x$ vote for the candidate $c'$, i.e., $\tau(x)=c'$. Let $T$ be a nice tree decomposition of $\GG$ of width $w$. Let $r$ be the root of $T$. We assume that the parent of each leaf is an \emph{Introduce node} where the vertex $x$ is introduced and that $x$ is never forgotten. In particular, $x$ is in the bag of the root of $T$. See the related discussion in \Cref{sec:prelims}.
Consider a vertex $t \in V_T$. Let $V_t$ be the union of the bags of all vertices in the subtree of $T$ rooted at $t$. Let $E_t$ be the set of all edges introduced in the subtree of $T$ rooted at $t$. Define the graph $G_t\coloneqq (V_t, E_t)$.

Consider deleting a subset $\WW \subseteq V_t \setminus \{x\}$ of vertices from $G_t$. The resulting graph, which we will denote by $G_t \setminus \WW$, will have its vertex set partitioned into connected components. Let $\ell$ of those components, denoted by $\calY^{(\WW,t)}_1,\ldots,\calY^{(\WW,t)}_\ell$, have non-empty intersections with the bag $X_t$ of $t$. For $i=1,\ldots,\ell$ define $\calP^{(\WW,t)}_i\coloneqq \calY^{(\WW,t)}_i\cap X_t$. Let $\calP^{(\WW,t)}\coloneqq (\calP^{(\WW,t)}_1,\ldots,\calP^{(\WW,t)}_\ell)$ be the partition of $X_t\setminus \WW$ defined by the $\calp^{(\WW,t)}_i$s. By definition, each $\calP^{(\WW,t)}_i$ is non-empty, which implies that $\ell \leq |X_t| \leq w+1$.

We now describe the entries of our dynamic programming table, $\tbl$. Each entry is indexed by a tuple $(t, \calp, \overline{n})$, whose components are described below.
\begin{itemize}
\item $t \in V_T$ is a vertex of $T$.
\item $\calP=(\calP_1,\ldots,\calp_\ell)$ is a partition of a subset $S$ of $X_t$ into non-empty parts $\calp_i$. If $S=\emptyset$, then $\calP$ is the empty partition that we denote by $\bot$.
\item If $S\neq \emptyset$, then $\overline{n}$ is a tuple $(n_{i,j})_{i\in [\ell], j \in [m]}$ of numbers, where each $n_{i,j}$ is a non-negative integer in $\left[ n \right]_{0}$. If $S=\emptyset$, define $\overline{n}$ to be the empty sequence that we denote by $\epsilon$.
\end{itemize}
After our algorithm terminates, $\tbl[t, \calp, \overline{n}]$ will contain the minimum cost of any subset $\WW \subseteq V_t \setminus \{x\}$ which satisfies the following conditions:
\begin{enumerate}
\item $\calP^{(\WW, t)}=\calp$, and
\item For each $i \in [\ell]$ and $j \in [m]$, the number of votes garnered by the candidate $j$ in $\calY^{(\WW, t)}_i$ is $n_{i,j}$.
\end{enumerate}
We say that a $\WW$ that satisfies the above two conditions induces the pair $(\calP, \overline{n})$ on $(X_t, G_t)$. If there is no such $\WW$, $\tbl[t, \calp, \overline{n}]$ will be $\infty$.

The number of entries in our table is at most $w^{\OO(w)}\cdot n^{\OO(mw)}$. Once we fill up the table, for each $\widetilde{n}=(n_1, \ldots, n_m)$ such that each $n_i$ is an integer in $\left[ n \right]_{0}$, we check if both of the following conditions are true:
\begin{enumerate}
\item $\forall j \in \{1, \ldots, m\}\setminus \{c\}, n_{c}> n_j$, and
\item $\tbl[r, \{\{x\}\}, \widetilde{n}] \leq \BB$.
\end{enumerate}
If we find such a $\widetilde{n}$, we return \textsf{YES}; otherwise, we return \textsf{NO}. This takes $\OO(n^m\cdot \mathsf{poly}(n, m))$ time. We now proceed to describe how to fill up the entries of $\tbl$.
\subsection*{Initialization}
Let $t$ be a leaf in $T$. Then, by our assumption, $X_t=\emptyset$. We set $\tbl[t,\bot,\epsilon]\gets 0$.
\subsection*{Populating the table}
We inductively assume that while filling up $\tbl[t,\calP,\overline{n}]$, all entries of the form $\tbl[t',\cdot, \cdot]$, where $t'$ is a vertex in the subtree of $T$ rooted at $t$, are filled. The update rule of our dynamic programming algorithm depends on the kind of vertex $t$ is. For the rest of this section, we denote a generic partition by $\calp=\{\calp_1, \ldots, \calp_\ell\}$ and a generic tuple by $\overline{n}=(n_{i,j})_{i\in[\ell], j \in [m]}$; if $\ell=0$, $\calp$ and $\overline{n}$ will be understood to be $\bot$ and $\epsilon$ respectively.
\begin{description}
\item[Introduce node.] Let $t$ be an Introduce node, and let vertex $v \in \VV$ of $\GG$ be introduced at $t$. Thus, $t$ has only one child $t'$, $X_t=X_{t'} \cup \{v\}$, $v \notin V_{t'}$, $E_t=E_{t'}$ and $v$ is an isolated vertex in $G_t$.

Fix a partition $\calp=\{\calp_1,\ldots,\calp_\ell\}$ of a subset of $X_t$ and a tuple $\overline{n}=(n_{i,j})_{i\in [\ell], j \in [m]}$. We consider the following cases, depending on $\calp$ and $\overline{n}$.\\
\begin{enumerate}
\item Suppose that there does not exist any part of $\calp$ that contains $v$. Then $\WW \subseteq V_t \setminus \{ x \}$ induces $(\calp, \overline{n})$ on $X_t$ \emph{if and only if} $v \in \WW$ and $\WW \setminus \{v\}$ induces $(\calp, \overline{n})$ on $X_{t'}$. We update
\[\tbl[t, \calp, \overline{n}]\gets \tbl[t', \calp, \overline{n}] + \pi(v).\]
Note that the case $\calp=\bot$ and $\overline{n}=\epsilon$ falls here.
\item Let $\calp$ and $\overline{n}$ satisfy both of the following two conditions:
\begin{enumerate}
\item One of the parts of $\calp$ is $\{v\}$. WLOG assume that $\calp_\ell=\{v\}$.
\item For $j \in [m]$, $n_{\ell,j}=\left\{\begin{array}{ll}1 & \mbox{if $\tau(v)=j$,} \\ 0 & \mbox{otherwise.}\end{array}\right.$
\end{enumerate}
 Let $\calp' = \{\calP_1,\ldots, \calp_{\ell-1}\}$. Let $\overline{n}'\coloneqq (n_{i,j})_{i\in [\ell-1], j \in [m]}$. Then $\WW \subseteq V_t$ induces $(\calp, \overline{n})$ on $X_t$ \emph{if and only if} $v \notin \WW$ and $\WW$ induces $(\calp', \overline{n}')$ on $X_{t'}$. We update
 \[\tbl[t, \calp, \overline{n}]\gets \tbl[t', \calp', \overline{n}'].\]
\item If none of the previous two cases is true, then no $\WW \subseteq V_t$ can induce $(\calp, \overline{n})$ on $X_t$. We update $\tbl[t, \calp, \overline{n}]\gets \infty.$
\end{enumerate}
The updation time is $\mathsf{poly}(n, m)$.
\item[Introduce edge node.] Let $t$ be an Introduce edge node. Thus, $t$ has only one child $t'$, and $X_t=X_{t'}$. Let an edge $\{u, v\}$, for $u, v \in X_{t'}$, be introduced at $t$.

Fix a partition $\calp=\{\calp_1,\ldots,\calp_\ell\}$ of a subset of $X_t$ and a tuple $\overline{n}=(n_{i,j})_{i\in [\ell], j \in [m]}$. Now, for a partition $\calp'=(\calp'_1, \ldots, \calp'_{\ell'})$ of a subset $S$ of $X_{t'}=X_t$ and a tuple $\overline{n}'=(n'_{i, j})_{i \in [\ell'], j \in [m]}$, we define a partition $\lambda(\calp')$ of $S$ and a tuple $\sigma(\overline{n}')$ as follows:
\begin{itemize}
\item If $u, v$ belong to $S$, and they belong to two different parts $\calp_{i_1}$ and $\calp_{i_2}$ of $\calp'$, replace those two parts by their union. Define $\lambda(\calp')$ to be the resulting partition. Obtain $\sigma(\overline{n}')$ from $\overline{n}'$ by replacing the entries $n'_{i_1, j}$ and $n'_{i_2, j}$ by their sum, for each $j \in [m]$.
\item  Otherwise define $\lambda(\calp')\coloneqq \calp'$ and $\sigma(\overline{n}')\coloneqq \overline{n}'$.
\end{itemize}
Note that $\WW \subseteq V_t$ induces $(\calp, \overline{n})$ on $(X_t, G_t)$ if and only if $\WW$ induces $(\calp', \overline{n}')$ on $(X_{t'}, G_{t'})$ for some $\calp'$ and $\overline{n}'$ such that $\lambda(\calp')=\calp$ and $\sigma(\overline{n}')=\overline{n}$. This leads us to our update rule:
\[\tbl[t, \calp, \overline{n}] \gets \min_{\calp', \overline{n}'}\tbl[t', \calp', \overline{n}'],\]
where the minimum is over all pairs $(\calp', \overline{n}')$ such that $\lambda(\calp')=\calp$ and $\sigma(\overline{n}')=\overline{n}$. The updation takes time $w^{\OO(w)}\cdot n^{\OO(mw)}\cdot \mathsf{poly}(n, m)$.

\item[Forget vertex node.] Let $t$ be a Forget vertex node, and let vertex $v\in \VV$ of $\GG$ be forgotten at $t$. Thus, $t$ has only one child $t'$, $X_t=X_{t'}\setminus\{v\}$ and $v \in X_{t'}$. However, note that $G_t$ and $G_{t'}$ are the same graph.

Fix a partition $\calp=\{\calp_1,\ldots,\calp_\ell\}$ of a subset of $X_t$ and a tuple $\overline{n}=(n_{i,j})_{i\in [\ell], j \in [m]}$. Let $\calp'=\{\calp'_1,\ldots,\calp'_{\ell'}\}$ be a partition of a subset of $X_{t'}$. Let $\overline{n}'$ be a tuple $(n'_{i,j})_{i\in [\ell'], j \in [m]}$. We say that the pair $(\calp', \overline{n}')$ is \emph{consistent with} the pair $(\calp, \overline{n})$ if one of the following is true:
\begin{enumerate}
\item $\calp=\calp'$ and $\overline{n}=\overline{n}'$.
\item $\ell=\ell'$ and there exists an $i \in [\ell]$ such that both of the following are true
\begin{enumerate}
\item $\calp'_i=\calp_i \cup \{v\}$ and for each $j \in [m]$, $n'_{i, j}=n_{i, j}$
\item $\forall k \in [\ell] \setminus \{i\}$, $\calp'_k=\calp_k$ 
and for all $j \in [m]$, $n'_{k, j}=n_{k, j}$.
\end{enumerate}
\item $\ell +1 = \ell'$, $\calp'=\calp \cup \{\{v\}\}$. WLOG, assume that $\calp'_{\ell'}=\{v\}$. Furthermore, for all $i \in [\ell], j \in [m]$, $n'_{i,j}=n_{i,j}$. 
\end{enumerate}
We now argue that $\WW\subseteq V_t$ induces $(\calp, \overline{n})$ on $X_t$, \emph{if and only if} $\WW$ induces some $(\calp', \overline{n}')$ consistent with $(\calp, \overline{n})$ on $X_{t'}$. To see this, recall that the graphs $G_t$ and $G_{t'}$ are the same. Thus, the connected components $\{\calY^{\WW,t}_i\}_i$ and $\{\calY^{\WW,t'}_i\}_i$ are the same. Now $X_{t'}=X_t \cup \{v\}$. Thus for each $i$, one of the following is true: 1. $\calP'_i=\calY^{\WW,t'}_i \cap X_{t'}=\{v\}$ and $\calY^{\WW,t}_i \cap X_{t}=\emptyset$, 2. $\calp'_i=\calp_i$, 3. $v \in \calp'_i$ and $\calp'_i\setminus \{v\}=\calp_i$. Furthermore, (2) holds for all but at most one $i$. It can be verified that the tuples $\overline{n}$ and $\overline{n}'$ must be related so as to ensure that the pairs $(\calp', \overline{n}')$ and $(\calp, \overline{n})$ are consistent. This brings us to our update rule:
\[\tbl[t, \calp, \overline{n}]\gets \min_{(\calp', \overline{n}')\text{ consistent with }(\calp, \overline{n})} \tbl[t', \calp', \overline{n}'].\]
The updation takes time $w^{\OO(w)}\cdot n^{\OO(mw)}\cdot \mathsf{poly}(n, m)$.
\item[Join node.] Let $t$ be a Join node. Thus, it has two children $t_1$ and $t_2$. Furthermore, $X_t=X_{t_1}=X_{t_2}$.

It follows from the definition of join node that $X_t = X_{t_1}=X_{t_2} \subseteq V_{t_1} \cap V_{t_2}$. Can there be a vertex $v$ in $V_{t_1} \cap V_{t_2}$ that is not in $X_t$? It is easy to see that the answer is `no', as otherwise the subgraph of $T$ induced by the set $\{t \mid v \in X_t\}$ will not be connected. We conclude that $V_{t_1} \cap V_{t_2}=X_t$.

Fix a partition $\calp=(\calp_1, \ldots, \calp_\ell)$ of a subset of $X_t$, and a tuple $\overline{n}=(n_{i,j})_{i\in[\ell], j \in [m]}$.

Suppose $\WW \subseteq V_t$ induces $(\calp, \overline{n})$ on $(X_t, G_t)$. Express $\WW$ as the disjoint union
\[\WW=\WW_b \uplus\WW_1\uplus \WW_2,\mbox{ where}\]
\begin{itemize}
\item  $\WW_b=\WW \cap X_t$,
\item $\WW_1=(\WW \setminus \WW_b) \cap V_{t_1}$, and
\item $\WW_2=(\WW \setminus \WW_b) \cap V_{t_2}$.
\end{itemize}
Let $\calp'$ and $\calp''$ be the partitions of subsets of $X_{t}$ (which, recall, is the same as $X_{t_1}$ and $X_{t_2}$), and $\overline{\mu}, \overline{\nu}$ be vectors of numbers such that $\WW_b \uplus\WW_1$ induces $(\calp', \overline{\mu})$ on $(X_{t_1}, G_{t_1})$, and $\WW_b \uplus\WW_2$ induces $(\calp'', \overline{\nu})$ on $(X_{t_2}, G_{t_2})$. Can we express $\calP$ and $\overline{n}$ in terms of $\calp', \calp'', \overline{\mu}$ and $\overline{\nu}$? In other words, are $\calP$ and $\overline{n}$ determined by $\calp', \calp'', \overline{\mu}$ and $\overline{\nu}$? We will answer it in the affirmative. Let us start with $\calp$.

 Let $u$ and $v$ be two vertices in $X_t\setminus\WW_b$. We wish to determine whether or not $u$ and $v$ are in the same part of $\calp$.

Assume that there is a path $p$ between $u$ and $v$ in $G_t\setminus \WW$. Note that the edge set of $G_t\setminus\WW$ is the disjoint union (each edge is introduced exactly once) of the edge sets of $G_{t_1}\setminus\left(\WW_b \uplus \WW_1\right)$ and $G_{t_2}\setminus\left(\WW_b \uplus \WW_2\right)$. Also recall that the only vertices common to $V_{t_1}$ and $V_{t_2}$ are the ones in $X_t$. Thus $p$ is necessarily of the form $p_1p_2\ldots p_t$, where each $p_i$ is a path between two vertices in $X_t\setminus\WW_b$ all of whose edges are either in $G_{t_1}\setminus\left(\WW_b \uplus \WW_1\right)$ or in $G_{t_2}\setminus\left(\WW_b \uplus \WW_2\right)$.

To capture this, we define a graph $H=(X_t\setminus \WW_b , E)$ as follows: for each $u, v \in X_t\setminus \WW_b$, the pair $\{u, v\}$ is an edge in $E$ if $u$ and $v$ are in the same part in $\calp'$ or in $\calp''$. It follows that there exists path $p=p_1p_2\ldots p_t$ as stated before if and only if $u$ and $v$ are connected in the above graph. We have the following procedure to determine $\calp$ from $\calp'$ and $\calp''$:
\begin{enumerate}
\item Build the graph $H$ from $\calp'$ and $\calp''$.
\item Return the connected components of $H$ as the parts $\calp_i$s of $\calp$.
\end{enumerate}
The procedure above can be implemented in time polynomial in $|X_t|\leq w$.

Now, let us express $\overline{n}$ in terms of $\overline{\mu}, \overline{\nu}, \calp'$ and $\calp''$. To this end, let
\begin{itemize}
\item $\calY^{(\WW,t)}_1,\ldots,\calY^{(\WW,t)}_\ell$ be the connected components of $G_t\setminus\WW$ with non-empty intersection with $X_t$,
\item $\calY^{(\WW_1,t_1)}_1,\ldots,\calY^{(\WW_1,t_1)}_{\ell_1}$ be the connected components of $G_{t_1}\setminus\left(\WW_b \uplus \WW_1\right)$ with non-empty intersection with $X_{t_1}=X_t$, and
\item $\calY^{(\WW_2,t_2)}_1,\ldots,\calY^{(\WW_2,t_2)}_{\ell_2}$ be the connected components of $G_{t_2}\setminus\left(\WW_b \uplus \WW_2\right)$ with non-empty intersection with $X_{t_2}=X_t$.
\end{itemize}
It is easy to see that if two vertices are connected in $G_{t_1}\setminus\left(\WW_b \uplus \WW_1\right)$ (resp. $G_{t_2}\setminus\left(\WW_b \uplus \WW_2\right)$), then they are connected in $G_t \setminus \WW$ as well. This lets us infer the following:
\begin{enumerate}
\item[($\alpha$)] Each connected component of $G_t\setminus \WW$ is the union of a set (possibly empty) of connected components of $G_{t_1}\setminus \left(\WW_b \uplus \WW_1\right)$ and a set (possibly empty) of connected components of $G_{t_2}\setminus\left(\WW_b \uplus \WW_2\right)$.
\item[($\beta$)] Each part of $\calP$ is the union of a set of parts of $\calp'$ (resp. $\calp''$). Put another way, $\calp'$ and $\calp''$ are refinements of $\calp$.
\item[($\gamma$)] Each $\calY^{(\WW_1,t_1)}_i$ (resp. $\calY^{(\WW_2,t_2)}_i)$ is completely contained in some $\calY^{(\WW,t)}_j$.
\end{enumerate}
From the above, we infer the following. Suppose $\calp_i=\cup_{j \in S_1} (\calp')_j=\cup_{k \in S_2} (\calp'')_k$. Then, $\bigcup_{j \in S_1}\calY^{(\WW_1,t_1)}_j \bigcup_{k \in S_2} \calY^{(\WW_2,t_2)}_k \subseteq \calY^{(\WW,t)}_i$.

Can $\bigcup_{j \in S_1}\calY^{(\WW_1,t_1)}_j \bigcup_{k \in S_2} \calY^{(\WW_2,t_2)}_k$ be a proper subset of $\calY^{(\WW,t)}_i$? In view of the point ($\alpha$) above, the question boils down to the question whether there can exist a connected component of $G_{t_1}\setminus\left(\WW_b \uplus \WW_1\right)$ or $G_{t_2}\setminus\left(\WW_b \uplus \WW_2\right)$, which (i) has empty intersection with $X_t$, and (ii) is completely contained in $\calY^{(\WW,t)}_i$.

We claim that the answer to the above question is `no.' To prove the claim, towards a contradiction, assume that there exists a connected component $K$ of $G_{t_1}\setminus\left(\WW_b \uplus \WW_1\right)$ that satisfies the conditions (i) and (ii) above. The case of there being such a connected component of $G_{t_2}\setminus\left(\WW_b \uplus \WW_2\right)$ is similar.

Since $K \subseteq \calY^{(\WW,t)}_i$ and $\calY^{(\WW,t)}_i \cap X_t \neq \emptyset$, we have that for each vertex $u \in K$ there is a path $p$ to a vertex $w \in X_t$ in $G_t \setminus \WW$. Since $E_t$ is a disjoint union of $E_{t_1}$ and $E_{t_2}$ as noted before, and $V_{t_1} \cap V_{t_2}=X_t$, we have that $p$ is of the form $p_1\ldots p_{h}$, where each $p_i$ is a path either in $G_{t_1} \setminus\left(\WW_b \uplus \WW_1\right)$ or in $G_{t_2} \setminus\left(\WW_b \uplus \WW_2\right)$ which ends at a vertex in $X_t$. Since $u$ is a vertex of $G_{t_1} \setminus\left(\WW_b \uplus \WW_1\right)$, we have that $p_1$ is a path in $G_{t_1} \setminus\left(\WW_b \uplus \WW_1\right)$ connecting $u$ with a vertex in $X_t$, which contradicts the assumption that $K$ does not intersect $X_t$. This proves the claim. We conclude that $\bigcup_{j \in S_1}\calY^{(\WW_1,t_1)}_j \bigcup_{k \in S_2} \calY^{(\WW_2,t_2)}_k = \calY^{(\WW,t)}_i$.

Now we are ready to express $\overline{n}$ in terms of $\overline{\mu}, \overline{\nu}, \calp'$ and $\calp''$. Since $V_{t_1}\setminus X_t$ and $V_{t_2} \setminus X_t$ are disjoint, we have that for each $i \in [\ell], j \in [m]$, $n_{i, j}=\sum_{u \in S_1}\mu_{u, j}+\sum_{v \in S_2}\nu_{v, j}-N_{i, j}$, where $N_{i, j}$ is the number of nodes in $\calp_i$ that votes for $j$.

We have described a procedure $\mathcal A$ that, given $\calp', \calp'', \overline{\mu}$ and $\overline{\nu}$, produces $\calp$ and $\overline{n}$ in time polynomial in $m$ and $n$, with the following property: $\WW$ induces $(\calp, \overline{n})$ on $(X_t, G_t)$ \emph{if and only if} there exists $\calp', \calp'', \overline{\mu}$ and $\overline{\nu}$ such that 1. $\WW_b \uplus \WW_1$ induces $(\calp', \overline{\mu})$ on $(X_{t_1}, G_{t_1})$, 2. $\WW_b \uplus \WW_2$ induces $(\calp'', \overline{\nu})$ on $(X_{t_2}, G_{t_2})$, and 3. $\mathcal{A}(\calp', \calp'', \overline{\mu}, \overline{\nu}) = (\calp, \overline{n})$. Observe that $\pi(\WW)=\pi(\WW_b \uplus \WW_1)+\pi(\WW_b \uplus \WW_2)-\pi(X_t\setminus \cup_{i\in[\ell]}\calp_i)$. We thus have the following update rule:
\[\tbl[t, \calp, \overline{n}]\gets \min_{\calp', \calp'', \overline{\mu}, \overline{\nu}}\tbl[t_1, \calp', \overline{\mu}]+\tbl[t_2, \calp'', \overline{\nu}]-\pi(X_t\setminus \cup_{i\in[\ell]}\calp_i),\]
where the minimum is over all tuples $(\calp', \calp'', \overline{\mu}, \overline{\nu})$ such that $\mathcal{A}(\calp', \calp'', \overline{\mu}, \overline{\nu})=(\calp, \overline{n})$. Given $(\calp', \overline{\mu})$ and $(\calp'', \overline{\nu})$, one can find out $\calp$ and $\overline{n}$ by going over all partitions $\calp', \calp''$ and vectors $\overline{\mu}, \overline{\nu}$; this takes time $w^{\OO(w)}\cdot n^{\OO(mw)} \cdot \mathsf{poly}(n, m)$.
\end{description}
\end{proof}

\section{Algorithm for \DCN}
In this section, we prove \Cref{thm:easy-low-tw-dcn}. The broad idea is to check for each candidate $c'$ other than $c$, whether $c'$ can be made to garner strictly more votes than $c$ within the controller's budget. The aforementioned task for each candidate is accomplished by a dynamic programming algorithm, which closely follows the one for the constructive case (\Cref{algo:cn}). We assume that the tree decomposition $T$ of $\GG$ supplied as input is \emph{nice} (see \Cref{sec:tw}). Similar to the constructive case, our DP table has a set of cells for each node $t$ of $T$. Each cell is additionally indexed by a partition $\calp=(\calp_1, \ldots, \calp_\ell)$ of a subset of the bag $X_t$ of $t$, and a tuple of numbers $\overline{n}=(n_{i})_{i \in [\ell]}$. The vertices and edges of $\GG$, introduced in the subtree of $T$ rooted at $t$, induce a subgraph $G_t$ of $\GG$. In each step, the algorithm computes a cheapest set of vertices of $G_t$. Removal of such a cheapest set of vertices from $G_t$ breaks up $G_t$ into connected components whose intersections with the bag $X_t$ are given by $\calp$, and for each such components, the vote count differences between $c^{\prime}$ and $c$ are given by $\overline{n}$. Once the table is filled up, the table entries for the root of $T$ can be consulted to compute the output of the algorithm. We now proceed to the formal proof.

\begin{proof}[Proof of \Cref{thm:easy-low-tw-dcn}]
For every candidate $c'\in \CC \setminus \{c\}$ we determine whether there is a subset $\WW \subseteq \VV \setminus \{ x \}$ of total cost at most $\BB$ such that $c'$ garners more votes than $c$ in the restricted election where only voters reachable from $x$ in $\GG \setminus \WW$ are counted. Let \CB($c') \in\{\textsf{YES}, \textsf{NO}\}$ denote whether or not there exists such a $\WW$. Our overall algorithm will output \textsf{YES} if we find a candidate $c'$ for which \CB($c')=\textsf{YES}$, and output \textsf{NO} otherwise. Towards this plan, let us now fix a candidate $c' \neq c$.

We next propose a dynamic programming algorithm to determine \CB($c')$.
Without loss of generality, assume $T$ to be a nice tree decomposition of $\GG$ of width $w$. Let $r$ be the root of $T$. We assume that the parent of each leaf is an \emph{Introduce node} where the vertex $x$ is introduced and that $x$ is never forgotten. In particular, $x$ is in the bag of the root of $T$. See the related discussion in \Cref{sec:prelims}.
Consider a vertex $t \in V_T$. Let $V_t$ be the union of the bags of all vertices in the subtree of $T$ rooted at $t$. Let $E_t$ be the set of all edges introduced in the subtree of $T$ rooted at $t$. Define the graph $G_t\coloneqq (V_t, E_t)$.

Consider deleting a subset $\WW \subseteq V_t \setminus \{x\}$ of vertices from $G_t$. The resulting graph, which we will denote by $G_t \setminus \WW$, will have its vertex set partitioned into connected components. As in the Constructive case, let $\ell$ of those components, denoted by $\calY^{(\WW,t)}_1,\ldots,\calY^{(\WW,t)}_\ell$, have non-empty intersections with the bag $X_t$ of $t$. For $i=1,\ldots,\ell$ define $\calP^{(\WW,t)}_i\coloneqq \calY^{(\WW,t)}_i\cap X_t$. Let $\calP^{(\WW,t)}\coloneqq (\calP^{(\WW,t)}_1,\ldots,\calP^{(\WW,t)}_\ell)$ be the partition of $X_t\setminus \WW$ defined by the $\calp^{(\WW,t)}_i$s. By definition, each $\calP^{(\WW,t)}_i$ is non-empty, which implies that $\ell \leq |X_t| \leq w+1$.

We now describe the entries of our dynamic programming table, $\tbl$. Each entry is indexed by a tuple $(t, \calp, \overline{n})$, whose components are described below.
\begin{itemize}
\item $t \in V_T$ is a vertex of $T$.
\item $\calP=(\calP_1,\ldots,\calp_\ell)$ is a partition of a subset $S$ of $X_t$ into non-empty parts $\calp_i$. If $S=\emptyset$, then $\calP$ is the empty partition that we denote by $\bot$.
\item If $S\neq \emptyset$, then $\overline{n}$ is a tuple $(n_{i})_{i\in [\ell]}$ of $\ell$ numbers, where each $n_{i}$ is an integer in $[-n, n]$. If $S=\emptyset$, define $\overline{n}$ to be the empty sequence that we denote by $\epsilon$.
\end{itemize}
After our algorithm terminates, $\tbl[t, \calp, \overline{n}]$ will contain the minimum cost of any subset $\WW \subseteq V_t \setminus \{x\}$ which satisfies the following conditions:
\begin{enumerate}
\item $\calP^{(\WW, t)}=\calp$, and
\item For each $i \in [\ell]$, if $a_c$ and $a_{c'}$ be the numbers of votes garnered by $c$ and $c'$ respectively in $\calY^{(\WW, t)}_i$, then $n_{i}=a_{c'} - a_c$.
\end{enumerate}
We say that a $\WW$ that satisfies the above two conditions induces the pair $(\calP, \overline{n})$ on $(X_t, G_t)$. If there is no such $\WW$, $\tbl[t, \calp, \overline{n}]$ will be $\infty$.

The number of entries in our table is at most $w^{\OO(w)}\cdot n^{\OO(w)}$. Once we fill up the table, for each $a>0$ we check whether $\tbl[r, \{\{x\}\}, a] \leq \BB$. If we find such an $a$, we return \textsf{YES} as the answer to $\CB$; otherwise, we return \textsf{NO}. This takes $ \mathsf{poly}(n)$ time. We now proceed to describe how to fill up the entries of $\tbl$.
\subsection*{Initialization}
Let $t$ be a leaf in $T$. Then, by our assumption, $X_t=\emptyset$. Set $\tbl[t,\bot,\epsilon]\gets 0$.
\subsection*{Populating the table} 

Similar to the constructive case, we inductively assume that while filling up $\tbl[t,\calP,\overline{n}]$, all entries of the form $\tbl[t',\cdot, \cdot]$, where $t'$ is a vertex in the subtree of $T$ rooted at $t$, are filled. The update rule of our dynamic programming algorithm depends on the kind of vertex $t$ is. For the rest of this section, we denote a generic partition by $\calp=\{\calp_1, \ldots, \calp_\ell\}$ and a generic tuple by $\overline{n}=(n_{i})_{i\in[\ell]}$; if $\ell=0$, then $\calp$ and $\overline{n}$ will be understood to be $\bot$ and $\epsilon$ respectively.
\begin{description}
\item[Introduce node.] Let $t$ be an Introduce node, and let vertex $v \in \VV$ of $\GG$ be introduced at $t$. Thus, $t$ has only one child $t'$, $X_t=X_{t'} \cup \{v\}$, $v \notin V_{t'}$, $E_t=E_{t'}$ and $v$ is an isolated vertex in $G_t$.

Fix a partition $\calp=\{\calp_1,\ldots,\calp_\ell\}$ of a subset of $X_t$ and a tuple $\overline{n}=(n_{i})_{i\in [\ell]}$. We consider the following cases, depending on $\calp$ and $\overline{n}$.\\
\begin{enumerate}
\item Suppose that there does not exist any part of $\calp$ that contains $v$. Then $\WW \subseteq V_t \setminus \{ x \}$ induces $(\calp, \overline{n})$ on $X_t$ \emph{if and only if} $v \in \WW$ and $\WW \setminus \{v\}$ induces $(\calp, \overline{n})$ on $X_{t'}$. We update
\[\tbl[t, \calp, \overline{n}]\gets \tbl[t', \calp, \overline{n}] + \pi(v).\]
Note that the case $\calp=\bot$ and $\overline{n}=\epsilon$ falls here.
\item Let $\calp$ and $\overline{n}$ satisfy both of the following two conditions:
\begin{enumerate}
\item One of the parts of $\calp$ is $\{v\}$. WLOG assume that $\calp_\ell=\{v\}$.
\item $n_{\ell}=\left\{\begin{array}{lll}1 & \mbox{if $\tau(v)=c'$,} \\ -1 & \mbox{if $\tau(v)=c$, and} \\ 0 & \mbox{otherwise.}\end{array}\right.$
\end{enumerate}
 Let $\calp' = \{\calP_1,\ldots, \calp_{\ell-1}\}$ and $\overline{n}'\coloneqq (n_{i})_{i\in [\ell-1]}$. Then $\WW \subseteq V_t$ induces $(\calp, \overline{n})$ on $X_t$ \emph{if and only if} $v \notin \WW$ and $\WW$ induces $(\calp', \overline{n}')$ on $X_{t'}$. We update
 \[\tbl[t, \calp, \overline{n}]\gets \tbl[t', \calp', \overline{n}'].\]
\item If none of the previous two cases is true, then no $\WW \subseteq V_t$ can induce $(\calp, \overline{n})$ on ($X_t, G_t)$. We update $\tbl[t, \calp, \overline{n}]\gets \infty$.
\end{enumerate}
The updation time is $\mathsf{poly}(n, m)$.
\item[Introduce edge node.] Let $t$ be an Introduce edge node. Thus, $t$ has only one child $t'$, and $X_t=X_{t'}$. Let an edge $\{u, v\}$, for $u, v \in X_{t'}$, be introduced at $t$.

Fix a partition $\calp=\{\calp_1,\ldots,\calp_\ell\}$ of a subset of $X_t$ and a tuple $\overline{n}=(n_{i})_{i\in [\ell]}$. Now, for a partition $\calp'=(\calp'_1, \ldots, \calp'_{\ell'})$ of a subset $S$ of $X_{t'}=X_t$ and a tuple $\overline{n}'=(n'_{i})_{i \in [\ell']}$, we define a partition $\lambda(\calp')$ of $S$ and a tuple $\sigma(\overline{n}')$ as follows:
\begin{itemize}
\item If $u, v$ belong to $S$, and they belong to two different parts of $\calp'$, replace those two parts by their union. Define $\lambda(\calp')$ to be the resulting partition. Obtain $\sigma(\overline{n}')$ from $\overline{n}'$ by replacing the entries $n'_{i_1}$ and $n'_{i_2}$ by their sum, for each $j \in [m]$.
\item  Otherwise define $\lambda(\calp')\coloneqq \calp'$ and $\sigma(\overline{n}')\coloneqq \overline{n}'$.
\end{itemize}
Note that $\WW \subseteq V_t$ induces $(\calp, \overline{n})$ on $(X_t, G_t)$ if and only if $\WW$ induces $(\calp', \overline{n}')$ on $(X_{t'}, G_{t'})$ for some $\calp'$ and $\overline{n}'$ such that $\lambda(\calp')=\calp$ and $\sigma(\overline{n}')=\overline{n}$. This leads us to our update rule:
\[\tbl[t, \calp, \overline{n}] \gets \min_{\calp', \overline{n}'}\tbl[t', \calp', \overline{n}'],\]
where the minimum is over all pairs $(\calp', \overline{n}')$ such that $\lambda(\calp')=\calp$ and $\sigma(\overline{n}')=\overline{n}$. The updation takes time $w^{\OO(w)}\cdot n^{\OO(w)}\cdot \mathsf{poly}(n)$.

\item[Forget vertex node.] Let $t$ be a Forget vertex node, and let vertex $v\in \VV$ of $\GG$ be forgotten at $t$. Thus, $t$ has only one child $t'$, $X_t=X_{t'}\setminus\{v\}$ and $v \in X_{t'}$. However, note that $G_t$ and $G_{t'}$ are the same graph.

Fix a partition $\calp=\{\calp_1,\ldots,\calp_\ell\}$ of a subset of $X_t$ and a tuple $\overline{n}=(n_{i})_{i\in [\ell]}$. Let $\calp'=\{\calp'_1,\ldots,\calp'_{\ell'}\}$ be a partition of a subset of $X_{t'}$. Let $\overline{n}'$ be a tuple $(n'_{i})_{i\in [\ell']}$. We say that the pair $(\calp', \overline{n}')$ is \emph{consistent with} the pair $(\calp, \overline{n})$ if one of the following is true:
\begin{enumerate}
\item $\calp=\calp'$ and $\overline{n}=\overline{n}'$.
\item $\ell=\ell'$ and there exists an $i \in [\ell]$ such that both of the following are true
\begin{enumerate}
\item $\calp'_i=\calp_i \cup \{v\}$ and \\$n_{i}=\left\{\begin{array}{lll}n'_{i}-1 & \mbox{if $\tau(v)=c'$,}\\ n'_{i}+1 & \mbox{if $\tau(v)=c$, and}\\ n'_{i} & \mbox{otherwise.}\end{array}\right.$
\item $\forall k \in [\ell] \setminus \{i\}$, $\calp'_k=\calp_k$ and $n'_{k}=n_{k}$.
\end{enumerate}
\item $\ell +1 = \ell'$, $\calp'=\calp \cup \{\{v\}\}$. WLOG, assume that $\calp'_{\ell'}=\{v\}$. Furthermore, for all $i \in [\ell]$, $n'_{i}=n_{i}$. 
\end{enumerate}
We now argue that $\WW\subseteq V_t$ induces $(\calp, \overline{n})$ on $(X_t, G_t)$, \emph{if and only if} $\WW$ induces some $(\calp', \overline{n}')$ consistent with $(\calp, \overline{n})$ on $(X_{t'}, G_{t'})$. To see this, recall that the graphs $G_t$ and $G_{t'}$ are the same. Thus, the connected components $\{\calY^{\WW,t}_i\}_i$ and $\{\calY^{\WW,t'}_i\}_i$ are the same. Now $X_{t'}=X_t \cup \{v\}$. Thus for each $i$, one of the following is true: 1. $\calP'_i=\calY^{\WW,t'}_i \cap X_{t'}=\{v\}$ and $\calY^{\WW,t}_i \cap X_{t}=\emptyset$, 2. $\calp'_i=\calp_i$, 3. $v \in \calp'_i$ and $\calp'_i\setminus \{v\}=\calp_i$. Furthermore, (2) holds for all but at most one $i$. It can be verified that the tuples $\overline{n}$ and $\overline{n}'$ must be related so as to ensure that the pairs $(\calp', \overline{n}')$ and $(\calp, \overline{n})$ are consistent.

This brings us to our update rule:
\[\tbl[t, \calp, \overline{n}]\gets \min_{(\calp', \overline{n}')\text{ consistent with }(\calp, \overline{n})} \tbl[t', \calp', \overline{n}'].\]
The updation takes time $w^{\OO(w)}\cdot n^{\OO(w)}\cdot \mathsf{poly}(n)$.
\item[Join node.] Let $t$ be a Join node. Thus, it has two children $t_1$ and $t_2$. Furthermore, $X_t=X_{t_1}=X_{t_2}$.

It follows from the definition of join node that $X_t = X_{t_1}=X_{t_2} \subseteq V_{t_1} \cap V_{t_2}$. Can there be a vertex $v$ in $V_{t_1} \cap V_{t_2}$ that is not in $X_t$? It is easy to see that the answer is `no', as otherwise the subgraph of $T$ induced by the set $\{t \mid v \in X_t\}$ will not be connected. We conclude that $V_{t_1} \cap V_{t_2}=X_t$.

Fix a partition $\calp=(\calp_1, \ldots, \calp_\ell)$ of a subset of $X_t$, and a tuple $\overline{n}=(n_{i})_{i\in[\ell]}$.

Suppose $\WW \subseteq V_t$ induces $(\calp, \overline{n})$ on $(X_t, G_t)$. Express $\WW$ as the disjoint union
\[\WW=\WW_b \uplus\WW_1\uplus \WW_2,\mbox{ where}\]
\begin{itemize}
\item  $\WW_b=\WW \cap X_t$,
\item $\WW_1=(\WW \setminus \WW_b) \cap V_{t_1}$, and
\item $\WW_2=(\WW \setminus \WW_b) \cap V_{t_2}$.
\end{itemize}
Let $\calp'$ and $\calp''$ be the partitions of subsets of $X_{t}$ (which, recall, is the same as $X_{t_1}$ and $X_{t_2}$), and $\overline{\mu}, \overline{\nu}$ be vectors of numbers such that $\WW_b \uplus\WW_1$ induces $(\calp', \overline{\mu})$ on $(X_{t_1}, G_{t_1})$, and $\WW_b \uplus\WW_2$ induces $(\calp'', \overline{\nu})$ on $(X_{t_2}, G_{t_2})$. Can we express $\calP$ and $\overline{n}$ in terms of $\calp', \calp'', \overline{\mu}$ and $\overline{\nu}$? In other words, are $\calP$ and $\overline{n}$ determined by $\calp', \calp'', \overline{\mu}$ and $\overline{\nu}$? We will answer it in the affirmative. Let us start with $\calp$.

 Let $u$ and $v$ be two vertices in $X_t\setminus\WW_b$. We wish to determine whether or not $u$ and $v$ are in the same part of $\calp$.

Assume that there is a path $p$ between $u$ and $v$ in $G_t\setminus \WW$. Note that the edge set of $G_t\setminus\WW$ is the disjoint union (each edge is introduced exactly once) of the edge sets of $G_{t_1}\setminus\left(\WW_b \uplus \WW_1\right)$ and $G_{t_2}\setminus\left(\WW_b \uplus \WW_2\right)$. Also recall that the only vertices common to $V_{t_1}$ and $V_{t_2}$ are the ones in $X_t$. Thus $p$ is necessarily of the form $p_1p_2\ldots p_t$, where each $p_i$ is a path between two vertices in $X_t\setminus\WW_b$ all of whose edges are either in $G_{t_1}\setminus\left(\WW_b \uplus \WW_1\right)$ or in $G_{t_2}\setminus\left(\WW_b \uplus \WW_2\right)$.

To capture this, we define a graph $H=(X_t\setminus \WW_b , E)$ as follows: for each $u, v \in X_t\setminus \WW_b$, the pair $\{u, v\}$ is an edge in $E$ if $u$ and $v$ are in the same part in $\calp'$ or in $\calp''$. It follows that there exists path $p=p_1p_2\ldots p_t$ as stated before if and only if $u$ and $v$ are connected in the above graph. We have the following procedure to determine $\calp$ from $\calp'$ and $\calp''$:
\begin{enumerate}
\item Build the graph $H$ from $\calp'$ and $\calp''$.
\item Return the connected components of $H$ as the parts $\calp_i$s of $\calp$.
\end{enumerate}
The procedure above can be implemented in polynomial time.

Now, let us express $\overline{n}$ in terms of $\overline{\mu}, \overline{\nu}, \calp'$ and $\calp''$. To this end, let
\begin{itemize}
\item $\calY^{(\WW,t)}_1,\ldots,\calY^{(\WW,t)}_\ell$ be the connected components of $G_t\setminus\WW$ with non-empty intersection with $X_t$,
\item $\calY^{(\WW_1,t_1)}_1,\ldots,\calY^{(\WW_1,t_1)}_{\ell_1}$ be the connected components of $G_{t_1}\setminus\left(\WW_b \uplus \WW_1\right)$ with non-empty intersection with $X_{t_1}=X_t$, and
\item $\calY^{(\WW_2,t_2)}_1,\ldots,\calY^{(\WW_2,t_2)}_{\ell_2}$ be the connected components of $G_{t_2}\setminus\left(\WW_b \uplus \WW_2\right)$ with non-empty intersection with $X_{t_2}=X_t$.
\end{itemize}
It is easy to see that if two vertices are connected in $G_{t_1}\setminus\left(\WW_b \uplus \WW_1\right)$ (resp. $G_{t_2}\setminus\left(\WW_b \uplus \WW_2\right)$), then they are connected in $G_t \setminus \WW$ as well. This lets us infer the following:
\begin{enumerate}
\item[($\alpha$)] Each connected component of $G_t\setminus \WW$ is the union of a set (possibly empty) of connected components of $G_{t_1}\setminus \left(\WW_b \uplus \WW_1\right)$ and a set (possibly empty) of connected components of $G_{t_2}\setminus\left(\WW_b \uplus \WW_2\right)$.
\item[($\beta$)] Each part of $\calP$ is the union of a set of parts of $\calp'$ (resp. $\calp''$). Put another way, $\calp'$ and $\calp''$ are refinements of $\calp$.
\item[($\gamma$)] Each $\calY^{(\WW_1,t_1)}_i$ (resp. $\calY^{(\WW_2,t_2)}_i)$ is completely contained in some $\calY^{(\WW,t)}_j$.
\end{enumerate}
From the above, we infer the following. Suppose $\calp_i=\cup_{j \in S_1} (\calp')_j=\cup_{k \in S_2} (\calp'')_k$. Then, $\bigcup_{j \in S_1}\calY^{(\WW_1,t_1)}_j \bigcup_{k \in S_2} \calY^{(\WW_2,t_2)}_k \subseteq \calY^{(\WW,t)}_i$.

Can $\bigcup_{j \in S_1}\calY^{(\WW_1,t_1)}_j \bigcup_{k \in S_2} \calY^{(\WW_2,t_2)}_k$ be a proper subset of $\calY^{(\WW,t)}_i$? In view of the point ($\alpha$) above, the question boils down to the question whether there can exist a connected component of $G_{t_1}\setminus\left(\WW_b \uplus \WW_1\right)$ or $G_{t_2}\setminus\left(\WW_b \uplus \WW_2\right)$, which (i) has empty intersection with $X_t$, and (ii) is completely contained in $\calY^{(\WW,t)}_i$.

We claim that the answer to the above question is `no.' To prove the claim, towards a contradiction, assume that there exists a connected component $K$ of $G_{t_1}\setminus\left(\WW_b \uplus \WW_1\right)$ that satisfies the conditions (i) and (ii) above. The case of there being such a connected component of $G_{t_2}\setminus\left(\WW_b \uplus \WW_2\right)$ is similar.

Since $K \subseteq \calY^{(\WW,t)}_i$ and $\calY^{(\WW,t)}_i \cap X_t \neq \emptyset$, we have that for each vertex $u \in K$ there is a path $p$ to a vertex $w \in X_t$ in $G_t \setminus \WW$. Since $E_t$ is a disjoint union of $E_{t_1}$ and $E_{t_2}$ as noted before, and $V_{t_1} \cap V_{t_2}=X_t$, we have that $p$ is of the form $p_1\ldots p_{h}$, where each $p_i$ is a path either in $G_{t_1} \setminus\left(\WW_b \uplus \WW_1\right)$ or in $G_{t_2} \setminus\left(\WW_b \uplus \WW_2\right)$ which ends at a vertex in $X_t$. Since $u$ is a vertex of $G_{t_1} \setminus\left(\WW_b \uplus \WW_1\right)$, we have that $p_1$ is a path in $G_{t_1} \setminus\left(\WW_b \uplus \WW_1\right)$ connecting $u$ with a vertex in $X_t$, which contradicts the assumption that $K$ does not intersect $X_t$. This proves the claim. We conclude that $\bigcup_{j \in S_1}\calY^{(\WW_1,t_1)}_j \bigcup_{k \in S_2} \calY^{(\WW_2,t_2)}_k = \calY^{(\WW,t)}_i$.

Now we are ready to express $\overline{n}$ in terms of $\overline{\mu}, \overline{\nu}, \calp'$ and $\calp''$. Since $V_{t_1}\setminus X_t$ and $V_{t_2} \setminus X_t$ are disjoint, we have that for each $i \in [\ell]$, $n_{i}=\sum_{u \in S_1}\mu_{u}+\sum_{v \in S_2}\nu_{v}-N_{i}$, where $N_{i}$ is the difference between the number of nodes in $\calp_i$ that vote for $c'$ and the number of nodes in $\calp_i$ that vote for $c$.

We have described a procedure $\mathcal A$ that, given $\calp', \calp'', \overline{\mu}$ and $\overline{\nu}$, produces $\calp$ and $\overline{n}$ in time polynomial in $n$, with the following property: $\WW$ induces $(\calp, \overline{n})$ on $(X_t, G_t)$ \emph{if and only if} there exists $\calp', \calp'', \overline{\mu}$ and $\overline{\nu}$ such that 1. $\WW_b \uplus \WW_1$ induces $(\calp', \overline{\mu})$ on $(X_{t_1}, G_{t_1})$, 2. $\WW_b \uplus \WW_2$ induces $(\calp'', \overline{\nu})$ on $(X_{t_2}, G_{t_2})$, and 3. $\mathcal{A}(\calp', \calp'', \overline{\mu}, \overline{\nu}) = (\calp, \overline{n})$. Observe that $\pi(\WW)=\pi(\WW_b \uplus \WW_1)+\pi(\WW_b \uplus \WW_2)-\pi(X_t\setminus \cup_{i\in[\ell]}\calp_i)$. We thus have the following update rule:
\[\tbl[t, \calp, \overline{n}]\gets \min_{\calp', \calp'', \overline{\mu}, \overline{\nu}}\tbl[t_1, \calp', \overline{\mu}]+\tbl[t_2, \calp'', \overline{\nu}]-\pi(X_t\setminus \cup_{i\in[\ell]}\calp_i),\]
where the minimum is over all tuples $(\calp', \calp'', \overline{\mu}, \overline{\nu})$ such that $\mathcal{A}(\calp', \calp'', \overline{\mu}, \overline{\nu})=(\calp, \overline{n})$. Given $(\calp', \overline{\mu})$ and $(\calp'', \overline{\nu})$, one can find out $\calp$ and $\overline{n}$ by going over all partitions $\calp', \calp''$ and vectors $\overline{\mu}, \overline{\nu}$; this takes time $w^{\OO(w)}\cdot n^{\OO(w)} \cdot \mathsf{poly}(n)$.
\end{description}
\end{proof}

\section{Hardness results for $2$ candidates}\label{sec:2cand}
In this section we prove \Cref{thm:hardness-2cand}.
\begin{proof}[Proof of \Cref{thm:hardness-2cand}]
As discussed in \Cref{sec:contrib}, it is sufficient to show that \BCN is NP-complete for $m=2$. We will reduce \XThC (See \Cref{dxef:xthc}) to \BCN.
 \subsection*{The reduction}
 In this section, we describe the reduction. Given an instance $(\ell, S_1, \ldots, S_m)$ of \XThC, we will construct an instance $(\CC,\VV,\tau,\GG,c,x)$ of \BCN, where the
 set $\CC$ has two candidates $0$ and $1$ (i.e.~$\CC=\{0, 1\}$), and the preferred candidate $c$ is the candidate $1$.

 Now, we construct the set $\VV$ of voters as follows. We first define the following pairwise disjoint sets of voters: $\VV_1\coloneqq \{u_1, \ldots, u_{\ell (3m - 1)}\}, R=\{r\}, \VV_2\coloneqq \{v_1, \ldots, v_m\}, \VV_3\coloneqq \{w_{i, j} \mid i \in [m], j \in [3\ell]\}$. We define $\VV\coloneqq \VV_1 \cup R \cup \VV_2 \cup \VV_3$. We define $x\coloneqq u_1$. Next, we define $\tau$.
 \begin{enumerate}
\item For each voter $z \in \VV_1$, $\tau(z)=0$.
\item $\tau(r)=1$.
\item For each voter $z \in \VV_2$, $\tau(z)=0$.
\item For each voter $z \in \VV_3$, $\tau(z)=1$.
 \end{enumerate}
 We are now left with the task of defining the edge set $\EE$ of the graph $\GG=(\VV, \EE)$.
 \begin{enumerate}
 \item For each $i \in [\ell(3m-1)-1]$, there is an edge $\{u_i, u_{i+1}\}$ in $\EE$.
 \item There is the edge $\{u_{\ell(3m-1)}, r\}$ in $\EE$.
 \item For each $i \in [m]$, there is an edge $\{r, v_i\}$ in $\EE$.
 \item For each $i, j \in [m]$ and $k \in [3\ell]$, there is an edge $\{v_i, w_{j, k}\}$ if and only if $k \in S_i$. Thus, if $k \in S_i$, then all the edges $\{v_i, w_{1, k}\}, \ldots, \{v_i, w_{m, k}\}$ are in $\EE$, and if $k \notin S_i$ then none of these edges are in $\EE$.
\end{enumerate}
 As mentioned before, the preferred candidate $c$ is the candidate $1$. This concludes the definition of the instance $(\CC,\VV,\tau,\GG,c,x)$ of \BCN that the reduction produces. It is clear that the reduction runs in polynomial time. Refer to \Cref{R3X_to_G} for a graphical illustration of the reduction stated above.
\begin{figure}[!htbp]
    \centering
    \includegraphics[scale=0.6]{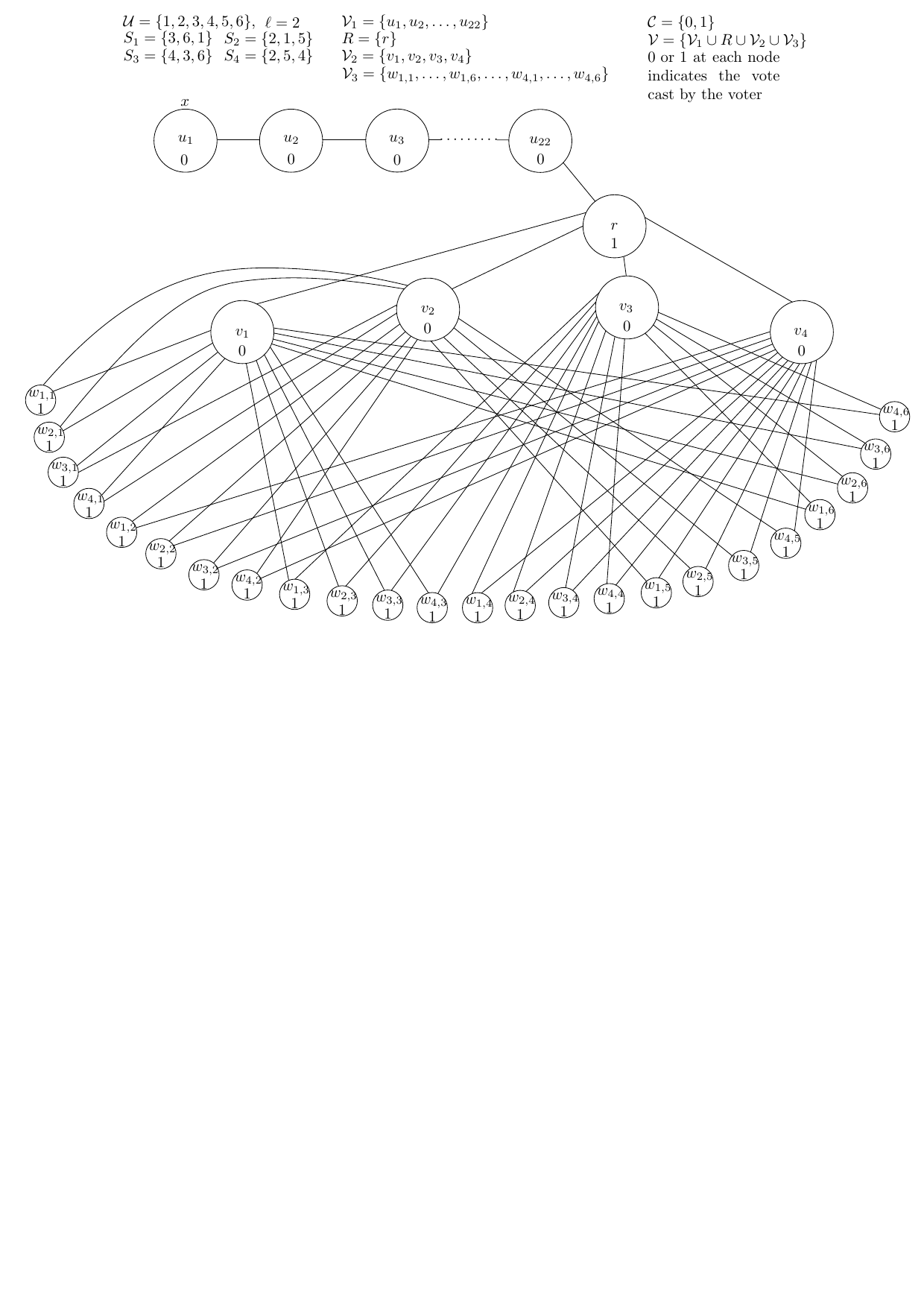}
    \caption{Reduction of \BCN from \XThC}
    \label{R3X_to_G}
\end{figure}
\subsection*{Correctness of the reduction}
Let $(\ell, S_1, \ldots, S_m)$ and $(\CC,\VV,\tau,\GG,c,x)$ be as defined in the previous section. We now show that $(\ell, S_1, \ldots, S_m)$ is a \emph{yes} instance of \XThC \emph{if and only if}
 $(\CC,\VV,\tau,\GG,c,x)$ is a \emph{yes} instance of \BCN.
 \begin{description}
 \item[(If)] Note that if $(\ell, S_1, \ldots, S_m)$ is a valid instance of \XThC then $3m \geq |\cup_{i\in[m]}S_i|=|\UU|=3\ell$ giving us that $m \geq \ell$. Let $(\CC,\VV,\tau,\GG,c,x)$ be a \emph{yes} instance of \BCN. Then, there exists a set $\WW \subseteq \VV \setminus \{u_1\}$ such that the candidate $1$ uniquely wins in the election restricted to the nodes reachable from $u_1$ in $\GG \setminus \WW$. Note that no vertex in $\VV_1 \cup R$ belongs to $\WW$, as otherwise no vertex $v$ with $\tau(v)=1$ will be reachable from $u_1$ in $\GG \setminus \WW$. We may also assume without loss of generality that no vertex in $\VV_3$ belongs to $\WW$, as they all vote for $1$, and their deletion does not serve to disconnect any vertex $v$ with $\tau(v)=0$ from $u_1$.

 We now count the number of votes gathered by candidate $0$ in the restricted election. Define $B=\{i \in [m] \mid v_i \in \WW\}$. Note that all vertices in $\VV_2 \setminus \WW$ are reachable from $u_1$ in $\GG \setminus \WW$, and hence they together contribute $m-|B|$ to the number of votes to candidate $0$. Thus, the total number of votes garnered by $0$ in the election restricted to $\GG \setminus \WW$ is $|\VV_1|+m-|B|=3\ell m - \ell+m-|B|$.

 Now let us count the number of votes to candidate $1$. Vertex $r$ contributes $1$ to the count. Now $w_{j,k} \in \VV_3$ is reachable from $u_1$ in $\GG\setminus \WW$ if and only if there is a set $S_i$ such that (1) $k \in S_i$ and (2) $i \notin B$ (i.e.~ $v_i \notin \WW$). The number of vertices in $\VV_3$ reachable from $u_1$ in $\GG\setminus \WW$ is thus $m|\cup_{i \notin B}S_i|$. Hence, the number of votes  to $1$ is $1+m|\cup_{i \notin B}S_i|$.

 We now show that $m-|B|=\ell$.
 \begin{claim}
 $m-|B|=\ell$.
 \label{clm:sc}
 \end{claim}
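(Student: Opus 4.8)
The plan is to extract a single numerical inequality from the hypothesis that candidate $1$ wins uniquely, and then trap $m-|B|$ between $\ell$ and $\ell$ using two different upper bounds on the number of covered universe elements. Write $q \coloneqq m-|B|$ for the number of indices $i$ with $v_i \notin \WW$, and $U \coloneqq |\cup_{i \notin B} S_i|$ for the number of universe elements covered by the surviving sets. From the vote counts already computed above, candidate $0$ garners $3\ell m - \ell + q$ votes while candidate $1$ garners $1 + mU$ votes, so the unique-win hypothesis yields
\[ 1 + mU > 3\ell m - \ell + q. \]
This is the only place the winning hypothesis enters; everything after is arithmetic with two bounds on $U$.

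For the upper bound $q \le \ell$, I would use that $\cup_{i \notin B} S_i \subseteq \UU$ and $|\UU| = 3\ell$, so $U \le 3\ell$ and hence $mU \le 3\ell m$. Combining with the displayed inequality gives $1 + 3\ell m > 3\ell m - \ell + q$, i.e.\ $q < \ell + 1$, so $q \le \ell$ by integrality.

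For the lower bound $q \ge \ell$ — the step where the calibration of the construction matters — I would instead use that the surviving sets are $q$ sets each of cardinality $3$, whence $U \le 3q$ and $mU \le 3mq$. Combining with the displayed inequality and using $3\ell m - \ell = \ell(3m-1)$ yields $q(3m-1) > \ell(3m-1) - 1$, i.e.\ $q > \ell - \tfrac{1}{3m-1}$. Since $m \ge \ell \ge 1$ (as noted at the start of the (If) direction), we have $3m-1 \ge 2$, so the subtracted term lies strictly in $(0,1)$ and integrality forces $q \ge \ell$. Together the two bounds give $q = \ell$, which is the claim.

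The genuinely clever point, and the only real obstacle, is the lower bound: the common factor $(3m-1)$ appearing on both sides originates from the length $\ell(3m-1)$ of the path $\VV_1$, which was chosen precisely so that the single extra vote contributed by $r$ is too small to let $q$ drop below $\ell$. I would therefore be careful to verify that $3m-1 \ge 2$ (guaranteed by $m \ge \ell \ge 1$), so that $\tfrac{1}{3m-1} < 1$ and the final integrality step is legitimate; the two bounds $U \le 3\ell$ and $U \le 3q$ are immediate from the definitions and require no further work.
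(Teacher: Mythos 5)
Your proof is correct and takes essentially the same route as the paper's: both directions rest on the same two bounds, namely $|\cup_{i\notin B}S_i|\le 3\ell$ (giving $m-|B|\le \ell$) and $|\cup_{i\notin B}S_i|\le 3(m-|B|)$ (giving $m-|B|\ge \ell$), each combined with the unique-win inequality. The paper merely packages this arithmetic as a two-case proof by contradiction, while you derive the two bounds directly and finish with integrality; the ingredients, including the need for $m\ge\ell$ to control the slack term, are identical.
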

 \begin{proof}
For a contradiction, assume that $m-|B|\neq\ell$. We split the proof into a few cases.
\begin{description}
\item[Case 1: $m-|B|\geq \ell+1$.] In this case, the number of votes to $0$ is at least $3\ell m+1$, while the number of votes to $1$ is $1+m|\cup_{i \notin B}S_i| \leq 1 + m|\UU|=1+3\ell m$. This contradicts the hypothesis that candidate 1 uniquely wins in the restricted election.
\item[Case 2: $m-|B|\leq \ell-1$.] In this case, the number of votes to $1$ at most $1+3m(m-|B|)$ (since each $S_i$ has 3 elements), which is at most $3\ell m - 3m + 1$, while the number of votes to $0$ is at least $ 3\ell m - \ell > 3\ell m - 3m + 1$ (since $3m-1\geq m \geq \ell$). This contradicts the hypothesis that candidate 1 uniquely wins in the restricted election.\qedhere
\end{description}
 \end{proof}
 Now, for $1$ to be the unique winner in the restricted election, we have that
 \[1+m|\cup_{i \notin B}S_i| \geq 1+3\ell m - \ell +(m-|B|)=1+3\ell m,\]
 where the second equality follows from \Cref{clm:sc}. We thus have that $|\cup_{i \in [m]\setminus B}S_i|=3\ell=|\UU|$. We conclude that $\cup_{i \in [m]\setminus B}S_i=\UU$. Since $m-|B|=\ell$ by \Cref{clm:sc}, we conclude that $(\ell, S_1, \ldots, S_m)$ is a \emph{yes} instance of \XThC.
 \item[(Only If)]Let $(\ell, S_1, \ldots, S_m)$ be a \emph{yes} instance of \XThC. Thus there exists a subset $\mathsf{SC} \subseteq [m]$, such that $|\mathsf{SC}|=\ell$ and $\cup_{i \in \mathsf{SC}} S_i=\UU$. Define $A\coloneqq \{v_i \in \VV_2 \mid i \in \mathsf{SC}\}$ and $\WW\coloneqq \VV_2 \setminus A$. We now take stock of the vertices reachable from $x=u_1$ in the graph $\GG \setminus \WW$.
 \begin{itemize}
 \item Each vertex $u_i \in \VV_1$ is reachable from $u_1$ by the path $(u_1, \ldots, u_i)$.
 \item $r$ is reachable from $u_1$ by the path $(u_1,\ldots, u_{\ell(3m-1)}, r)$.
 \item Each vertex $v_i$ in $\VV_2 \setminus \WW=\VV_2\cap A$ is reachable from $u_1$ by the path $(u_1,\ldots, u_{\ell(3m-1)}, r, v_i)$. Note that no vertex in $\WW=\VV\setminus A$ exists in $\GG \setminus \WW$.
 \item We show that for every $j \in [m]$ and $k \in [3\ell]$, $w_{j, k} \in \VV_3$ is reachable from $u_1$. By the definition of $A$, for each $k \in [3\ell]$ there exists $v_i \in A$ such that $k \in S_i$. Thus, by the definition of $\EE$, $\{v_i, w_{j, k}\} \in \EE$. This implies that $w_{j, k}$ is reachable from $u_i$ by the path $(u_1,\ldots, u_{\ell(3m-1)}, r, v_i, w_{j, k})$.
 \end{itemize}
 From the above, we calculate the number of vertices $z$ reachable in $\GG \setminus \WW$ from $u_1$ with $\tau(z)=0$ to be $|\VV_1|+|\VV_2 \cap A|=|\VV_1|+|A|=\ell(3m-1)+\ell=3\ell m$. On the other hand, the number of vertices $z$ reachable in $\GG \setminus \WW$ from $u_1$ with $\tau(z)=1$ is $|\mathsf{R}|+|\VV_3|=1+3\ell m$. Thus, Candidate $1$ wins uniquely in the election induced by the restriction of the function $\tau()$ to the graph $\GG \setminus \WW$. We conclude that $(\CC,\VV,\tau,\GG,c, x)$ is a \emph{yes} instance of \BCN.
 \end{description}
 \paragraph*{Remark.} This reduction does not make use of the assumption about a \XThC instance that each element of the universe belongs to exactly two of the given subsets.
 \end{proof}

\section{Hardness of \BCN for trees}\label{sec:tree}
In this section we prove \Cref{thm:hardness-tree}.
\begin{proof}[Proof of \Cref{thm:hardness-tree}]
We will reduce \XThC (see \Cref{dxef:xthc}) to \BCN such that an arbitrary instance of \XThC reduces to an instance of \BCN where the graph is a tree.
 \subsection*{The reduction}
 Given an instance $(\ell, S_1, \ldots, S_m)$ of \XThC, we will construct an instance $(\CC,\VV,\tau,\GG,c,x)$ of \BCN, such that the graph \GG is a tree. There will be $3\ell$ candidates $1,\ldots,3\ell$, which will be identified with the elements of $\UU$. In addition, there will be two special candidates, $c$ and $d$. Thus $\CC=\UU \cup \{c, d\}$.  $c$ is the preferred candidate.

 We now proceed to the description of $\GG$. A \emph{path graph} formed of a sequence of vertices $(x_1,\ldots,x_k)$ is a graph that contains an edge between $x_i$ and $x_{i+1}$ for each $i \in [k-1]$, and contains no other edge. We will now introduce some path graphs (henceforth referred to as paths) on disjoint sets of vertices in $\GG$. For each $i \in [m]$, let $e^{(i)}_1,\ldots,e^{(i)}_{|S_i|}$ be an arbitrary ordering of the elements of $S_i$. For each $i \in [m]$ we introduce a path $P^{(i)}$ formed by a sequence of vertices $(v^{(i)}_1, \ldots, v^{(i)}_{|S_i|}, u^{(i)}_1,\ldots, u^{(i)}_{3\ell}, w^{(i)})$. For $j \in [|S_i|], k \in [3\ell]$, the vertex $v^{(i)}_{j}$ corresponds to $e^{(i)}_j \in S_i$ and the vertex $u^{(i)}_{k}$ corresponds to $k \in \UU$.

 Next, we introduce a path $P^{\#}$ formed by a sequence of two vertices $(z^{\#}_1, z^{\#}_2)$, and a path $P^*$ formed by a sequence of vertices $(z^{*}_1,\ldots,z^{*}_{m+1-\ell})$.

 For two vertices $t_1$ and $t_2$ in a path, $t_1$ will be said to be \emph{higher (lower)} than $t_2$ if $t_1$ appears before (after) $t_2$ in the sequence the path is formed of. The highest element of a path will be called the \emph{head} of the path. Thus $v^{(i)}_1$ is the head of $P^{(i)}$, $z^{\#}_1$ is the head of $P^{\#}$ and $z^*_1$ is the head of $P^*$.

 Next, we connect $z^*_{m+1-\ell}$ with the head of each path other than $P^*$, namely, vertices $v_1^{(1)},\ldots,v_1^{(m)}$ and $z_1^{\#}$. This completes the description of $\GG$. Clearly, $\GG$ is a tree.

 The voter $x$ which conducts the election is the vertex $z^*_1$.

 Next, we define $\tau$.
 \begin{enumerate}
 \item For each $i\in [m], j \in [|S_i|]$ and $k \in [3\ell]$, define $\tau(v^{(i)}_j) \coloneqq e^{(i)}_j,\tau(u^{(i)}_k) \coloneqq k$ and $\tau(w^{(i)}) \coloneqq c$.
 \item Define $\tau(z^{\#}_1) \coloneqq c$ and $\tau(z^{\#}_2) \coloneqq c$.
 \item For each $i \in [m+1-\ell]$, define $\tau(z^*_i) \coloneqq d$.
 \end{enumerate}
 This concludes the definition of the reduced instance $(\CC,\VV,\tau,\GG,c, x)$ of \BCN. It is clear that the reduction runs in polynomial time. Refer to \Cref{R3X_to_tree} for a graphical illustration of the reduction stated above.
 \begin{figure}[!htbp]
    \centering
    \includegraphics[scale=0.6]{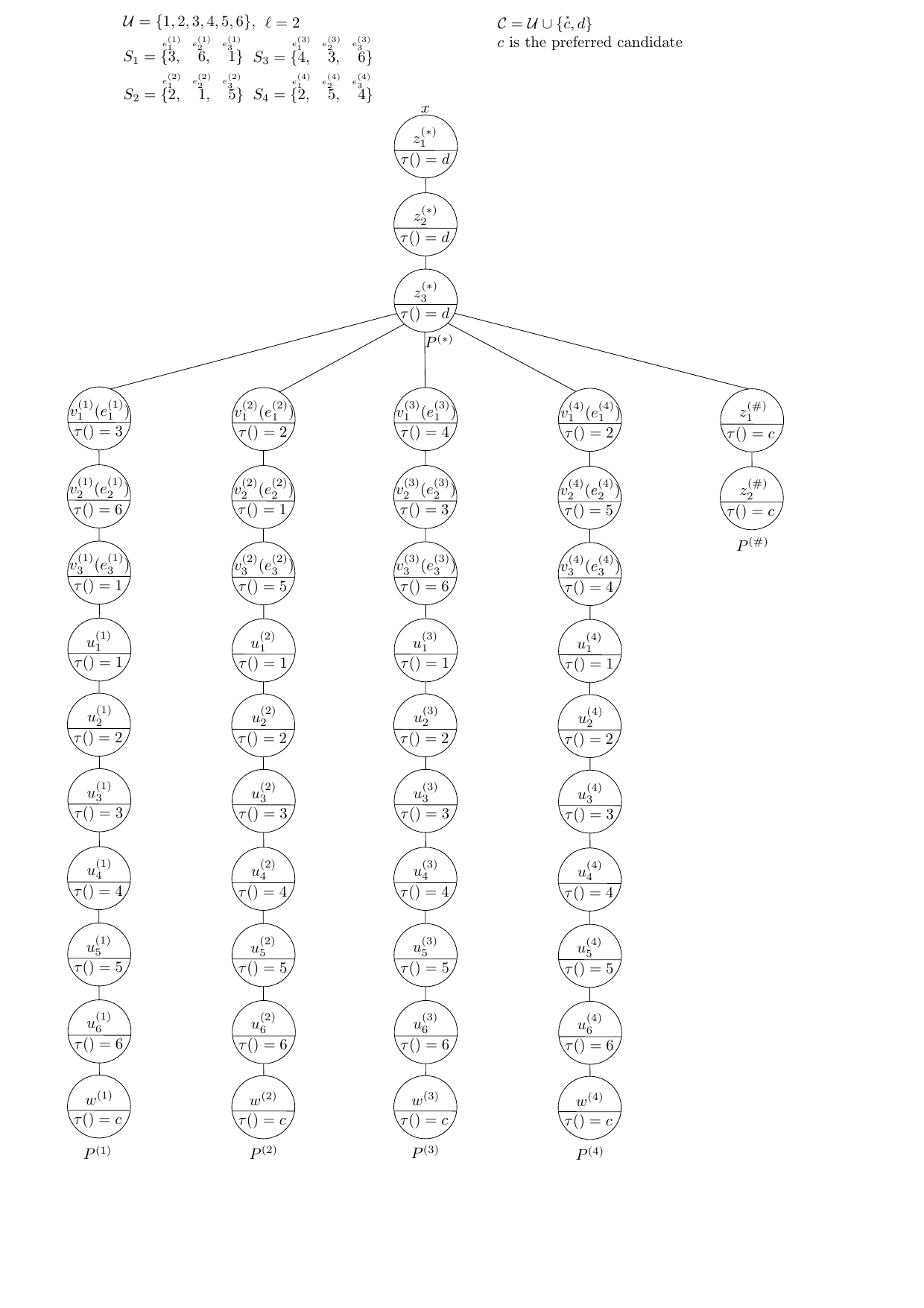}
    \caption{Reduction of \BCN from \XThC where \GG is a tree}
    \label{R3X_to_tree}
\end{figure}
 \subsection*{Correctness of the reduction}
 Let $(\ell, S_1, \ldots, S_m)$ and $(\CC,\VV,\tau,\GG,c, x)$ be as defined above. We first count the number of votes gotten by each candidate in $\GG$. Each $e \in \UU$ is contained in exactly two sets, say $S_{e_1}$ and $S_{e_2}$. So, there exists $k_{e_1} \leq |S_{e_1}|$ and $k_{e_2} \leq |S_{e_2}|$ such that $e^{(e_1)}_{k_{e_1}}=e$ and $e^{(e_2)}_{k_{e_2}}=e$. Thus $e$ gets two votes from vertex $v^{(e_1)}_{k_{e_1}}$ of $P^{(e_1)}$ and vertex $v^{(e_2)}_{k_{e_2}}$ of $P^{(e_2)}$. In addition, each $i \in [m]$ gets one vote from vertex $u^{(i)}_e$ in $P^{(i)}$. Thus, in total, $e$ gets $m+2$ votes.

 The set of voters voting for $c$ is $\{w^{(i)}\mid i \in [m]\} \cup \{z^{\#}_1, z^{\#}_2\}$. Thus, their number is $m+2$.

 The set of voters voting for $d$ is $\{z^*_i \mid i \in [m+1-\ell]\}$. Thus, their number is $m+1-\ell$.

 We now show that $(\ell, S_1, \ldots, S_m)$ is a \emph{yes} instance of \XThC \emph{if and only if}
 $(\CC,\VV,\tau,\GG,c,x)$ is a \emph{yes} instance of \BCN.
 \begin{description}
 \item[(If)] Let $(\CC,\VV,\tau,\GG,c,x)$ be a \emph{yes} instance of \BCN. Thus, there exists a set $\WW \subseteq \VV \setminus \{ z^*_1 \}$ such that candidate $c$ wins uniquely in the election restricted to the vertices reachable from $x=z^*_1$ in $\GG \setminus \WW$. We observe that certain assumptions can be made about $\WW$ without loss of generality.
 \begin{itemize}
\item $\WW$ does not contain any vertex of $P^*$, as otherwise no voter who votes for $c$ will be reachable from $x$.
\item We may assume that for each $i \in [m]$, $\WW$ either contains the head $v_1^{(i)}$ of $P^{(i)}$, or does  not contain any vertex of $P^{(i)}$. To see this, first observe that if $\WW$ contains multiple vertices from $P^{(i)}$, then we may instead include only the highest of those vertices in $\WW$; its removal will already disconnect all other vertices lower than it from $x$. Next, note that if any vertex of $w^{(i)}$ is removed, $c$ loses $1$ vote. On the other hand, if the head is removed, then the votes lost by all other candidates is maximized. Since we want to ensure the victory of $c$ in the restricted election, we may assume without loss of generality that if at all any vertex in $P^{(i)}$ is in $\WW$, then it is the head.
\item We may assume that $\WW$ does not contain any vertex in $P^{\#}$. Removing any of the vertices $z^{\#}_1$ and $z^{\#}_2$ will lead $c$ to lose votes, while no other candidate loses votes.
 \end{itemize}
 In light of the above, we assume that each element of $\WW$ is the head of a path $P^{(i)}$ for some $i \in [m]$. For the rest of this part of the proof, we identify $\WW$ with the set of indices $i \in [m]$ such that the head $v^{(i)}_1$ of $P^{(i)}$ is removed from $\GG$.
We claim that $|\WW|=\ell$.
\begin{claim}
\label{clm:sc_tree}
$|\WW|=\ell$.
\end{claim}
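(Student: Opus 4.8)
The plan is to compute the vote totals in the restricted election once the heads indexed by $\WW$ have been deleted, and then to extract both an upper and a lower bound on $|\WW|$ from the fact that $c$ is the unique winner.

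First I would pin down which voters survive. Deleting the head $v^{(i)}_1$ severs the whole path $P^{(i)}$ from $x$, since $v^{(i)}_1$ is the only vertex of $P^{(i)}$ adjacent to the rest of $\GG$. Because $\WW$ touches neither $P^*$ nor $P^{\#}$ (as established in the normalization above), the voters reachable from $x$ in $\GG \setminus \WW$ are precisely those on $P^*$, on $P^{\#}$, and on the paths $P^{(i)}$ with $i \notin \WW$. I would then tally the votes in this restricted election. Candidate $c$ receives $m - |\WW| + 2$ votes: one from each surviving $w^{(i)}$ with $i \notin \WW$, together with the two votes from $z^{\#}_1$ and $z^{\#}_2$. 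Candidate $d$ still receives all $m + 1 - \ell$ of its votes from $P^*$. For an element candidate $e \in \UU$, writing $d_e \coloneqq |\{i \notin \WW : e \in S_i\}|$, its total is $m - |\WW| + d_e$, since it collects one vote from the vertex $u^{(i)}_e$ on each of the $m - |\WW|$ surviving paths, plus one vote from the corresponding $v$-vertex on each surviving path whose set contains $e$. As each element lies in exactly two sets, $d_e \in \{0, 1, 2\}$.

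From here the two bounds fall out. Comparing $c$ with $d$ forces $m - |\WW| + 2 > m + 1 - \ell$, i.e. $|\WW| \le \ell$. Comparing $c$ with each element candidate $e$ forces $m - |\WW| + 2 > m - |\WW| + d_e$, i.e. $d_e \le 1$ for every $e$; equivalently, at least one of the two sets containing $e$ has its index in $\WW$, so that $\bigcup_{i \in \WW} S_i = \UU$. Since each $S_i$ has three elements and $|\UU| = 3\ell$, covering $\UU$ requires at least $\ell$ sets, giving $|\WW| \ge \ell$. Combining the two inequalities yields $|\WW| = \ell$.

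The main obstacle---really the only non-mechanical point---is the element-candidate tally and its interpretation: one must observe that the requirement ``$c$ beats every element candidate'' is exactly the statement that the indices in $\WW$ form a cover of $\UU$, which is precisely what supplies the lower bound $|\WW| \ge \ell$. Everything else is bookkeeping on the reachable set, resting on the already-established reduction of $\WW$ to a collection of heads of the paths $P^{(i)}$.
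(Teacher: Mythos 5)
Your proof is correct and takes essentially the same approach as the paper's: comparing $c$ against $d$ yields $|\WW| \le \ell$ (the paper's Case 1), and comparing $c$ against the element candidates forces $\bigcup_{i \in \WW} S_i = \UU$ and hence $|\WW| \ge \ell$, which is just the contrapositive of the paper's Case 2. The only difference is presentational---you argue directly with inequalities where the paper argues by contradiction in two cases---with the minor bonus that your version delivers the covering property $\bigcup_{i \in \WW} S_i = \UU$ inside the claim, which the paper extracts from its Case 2 only afterward.
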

\begin{proof}
Towards a contradiction, assume that $|\WW|\neq \ell$. We split the proof into two cases.
\begin{description}
\item[Case 1: $|\WW| \geq \ell+1$.] In this case, the number of votes won by $c$ in the restricted election is at most $m+2-(\ell+1)=m+1-\ell$. Since the number of votes gathered by $d$ is also $m+1-\ell$, $c$ is not the unique winner, and we have a contradiction.
\item[Case 2: $|\WW| \leq \ell-1$.] In this case, $|\cup_{i \in \WW}S_i|< 3\ell=|\UU|$, and hence there is an element $e \in \UU$ that is not in $\cup_{i \in \WW}S_i$. Thus, there is no vertex $v^{(i)}_j$ such that $i \in \WW, j \in [|S_i|]$ and $v^{(i)}_j$ votes for $e$ (i.e., $e^{(i)}_j = e$). Thus, the number of votes lost of $e$ due to the deletion of the paths $P^{(i)}$ for $i \in \WW$ is exactly $|\WW|\leq \ell-1$. Thus, it gathers $m+2-|\WW|$ votes in the restricted election, which is the same as the votes gathered by $c$. Thus, $c$ is not the unique winner, which leads to the desired contradiction.\qedhere
\end{description}
\end{proof}
Thus, $|\WW|=\ell$. Furthermore, the proof of case 2 of the above claim makes it clear that for $c$ to be the unique winner in the restricted election, it is necessary that for each $e \in \UU$ there is an index $i \in \WW$ such that $e \in S_i$. Thus, $\cup_{i \in \WW}S_i=\UU$. This shows that $(\ell, S_1, \ldots, S_m)$ is a \emph{yes} instance of \XThC.
 \item[(Only If)]Let $(\ell, S_1, \ldots, S_m)$ be a \emph{yes} instance of \XThC. Thus, by definition, there exists a set $A\subset [m]$, $|A|=\ell$ such that $\cup_{i \in A} S_i=\UU$. Define $\WW\coloneqq \{v^{(i)}_1 \mid i \in A\}$ to be the set of the heads of all the paths $P^{(i)}$ such that $i \in A$.

 Deletion of $\WW$ serves to remove the paths $P^{(i)}$ such that $v^{(i)}_1 \in \WW$. All other vertices in $\GG$ are reachable from $x$ in $\GG \setminus \WW$. We now count the votes lost by each candidate due to deletion of $\WW$. Each $e \in \UU$ loses $1$ vote from $u^{(i)}_e$ in each removed path $P^{(i)}$. Now, by the definition of $A$, there exists a $S_{e_1}$ such that $i \in A$ and $e \in S_{e_1}$. Thus, there exists an index $k_1 \in [|S_{e_1}|]$ such that $e^{(e_1)}_{k_1}=e$ and the vertex $v^{(e_1)}_{k_1}$ votes for $e$. Thus, the number of votes lost by $e$ is exactly $\ell+1$. Candidate $c$ loses exactly one vote for each deleted path $P^{(i)}$, namely the one by $w^{(i)}$. Thus, it loses $\ell$ votes. Candidate $d$ does not lose any votes.

 Hence, in the restricted election, $c$ gets $m+2-\ell$ votes, $d$ gets $m+1-\ell$ votes, and each $e \in \UU$ gets exactly $m+2-(\ell+1) = m+1-\ell$ votes. Thus, $c$ is the unique winner.
 \end{description}    
\end{proof}


\section{Conclusion and Open Questions}
\label{sec:conclusion}
Our paper initiates a novel study of a natural kind of election control in online elections, and introduces problems \CN and \DCN. Our paper is a comprehensive study of the computational aspects of these two problems. We present algorithms for the problems for restricted classes of inputs (where the number of candidates is bounded for \CN, and the treewidth of the network of voters is bounded for both problems), and establish the necessity of these restrictions on inputs by complementary hardness results.

Our work leaves open several interesting questions for future research. A natural follow-up question is whether the two problems admit algorithms whose time complexities have FPT (Fixed Parameter Tractable) dependencies on the parameters $w$ (treewidth) and $m$ (number of candidates) i.e, are of the form $\OO(f(w)\cdot \mathsf{poly}(n, m))$ for \DCN and $\OO(g(w, m) \cdot \mathsf{poly}(n, m))$ for \CN, where $f$ and $g$ are arbitrary computable functions. The dependencies achieved by our algorithms (\Cref{thm:easy-low-tw-dcn} and \Cref{thm:easy-low-tw-const-cand}) are ``slicewise" and fall short of being FPT.

Another interesting direction is to consider the optimizations versions of the two problems, where the election controller wishes to control the election, constructively or destructively, by incurring minimum total cost. Can we design efficient approximation algorithms for these problems, and sidestep the bounds on treewidth and number of candidates assumed in our algorithmic results (\Cref{thm:easy-low-tw-dcn} and \Cref{thm:easy-low-tw-const-cand})?




\bibliographystyle{plain} 
\bibliography{ref}


\end{document}